\newcommand{\Reals}{{\mathbb{R}}}            
\newcommand{\eps}{\varepsilon}               
\newcommand{\partition}{\mathsf{Prt}}
\DeclarePairedDelimiter\segment {\langle}{\rangle}
\newcommand{\Frd}{Fr\'echet distance}
\def\A{{\cal A}}
\def\C{{\cal C}}
\def\D{{\cal D}}
\def\G{{\cal G}}
\def\I{{\cal I}}
\def\L{{\cal L}}
\def\N{{\cal N}}
\def\O{{\cal O}}
\def\P{{\cal P}}
\def\R{{\cal R}}
\def\S{{\cal S}}
\def\V{{\cal V}}
\newcommand{\figref}[1]{Figure~\ref{#1}}
\newcommand{\lemref}[1]{Lemma~\ref{#1}}
\newcommand{\thmref}[1]{Theorem~\ref{#1}}
\newcommand{\secref}[1]{Section~\ref{#1}}
\newcommand{\subsecref}[1]{Section~\ref{#1}}
\newcommand{\algref}[1]{Algorithm~\ref{#1}}
\newtheorem{theorem}{Theorem}
\newtheorem{lemma}[theorem]{Lemma}
\def\eps{{\varepsilon}}
\begin{document}
	\title{On Approximate Near-Neighbors Queries among Curves under the (Continuous) Fr\'echet Distance}
	\date{}
	\author{Majid Mirzanezhad\thanks{Supported by the National Science Foundation grant CCF-1637576}}
	\affil[]{Department of Computer Science, Tulane University, USA}
	\affil[]{\textit {mmirzane@tulane.edu}}
	
	\maketitle
	\begin{abstract}
	Approximate near-neighbors search (\textsc{ANNS}) is a long-studied problem in computational geometry. 
In this paper, we revisit the problem and propose the first data structure for curves under the (continuous) Fr\'echet distance in $\Reals^d$. Given a set $\P$ of $n$ curves of size at most $m$ each in $\Reals^d$, 
and a real fixed $\delta>0$, we aim to preprocess $\P$ into a data structure so that for any given query curve $Q$ of size $k$, we can efficiently report all curves in $\P$ whose Fr\'echet distances to $Q$ are at most $\delta$. In the case that $k$ is given in the preprocessing stage, for any $\eps>0$ we propose a deterministic data structure whose space is  $n \cdot O\big(\max\big\{\big(\frac{\sqrt{d}}{\eps}\big)^{kd}, \big(\frac{\D\sqrt{d}}{\eps^2}\big)^{kd}\big\}\big)$ that can answer \textsc{$(1+\eps)\delta$-ANNS} queries in $O(kd)$ query time, where $\D$ is the diameter of $\P$. Considering $k$ as part of the query slightly changes the space to $n \cdot O\big(\frac{1}{\eps}\big)^{md} $ with $O(kd)$ query time within an approximation factor of $5+\eps$. We  show that our generic data structure for ANNS can give an alternative treatment of the approximate subtrajectory range searching problem studied by de Berg et al.~\cite{bcg-ffq-13}. We also revisit the time-window data structure for spatial density maps in~\cite{bndooh-twdsspd-20}. Given $\theta>0$, and $n$ time-stamped points spread over $m$ regions in a map, for any query window $W$, we propose a data structure of size $O(n/\eps^2)$ and construction time $O((n+m)/\eps^2)$ that can approximately return the regions containing at least $\theta$ points whose times are within $W$ in $O(1)$ query time. 
	\end{abstract}

\section{Introduction}
	Nearest neighbor search is a classical problem with a long history in computer science that has many applications in different areas such as pattern recognition, computer vision, DNA sequencing, databases, GIS, etc~\cite{l-hdrknn-91,bs-rnaspsaknn-06,hs-dbsd-99,glw-mpstd-07,rkv-nnq-95}.  In the elementary version of this problem, a set of points in $\Reals^d$ is given and for any query point the aim is to find the nearest point, in the point set, to the given query point. 
	Classical data structures, for instance point location using Voronoi diagrams, can solve the problem in $O(n^{\lceil d/2\rceil}\log n)$ preprocessing time that cause the curse of dimensionality issue as $d$ increases. 
	While a linear search serves to solve the problem readily, numerous methods have been employed to design data structures overcoming the running time bottleneck using approximation~\cite{amnsw-oaanns-98, k-tannshd-97, im-annrcd-98, kor-esannhds-00, diim-lshspd-04}. 
	There are several practical algorithms that make use of conventional data structures  including KD-tree, R-tree, X-tree, SS-tree, SR-tree, VP-tree, metric-trees for when the dimension $d$ is low~\cite{b-mdbstas-75, samet-06,wsb-ssmhds-98}. 
	
	There is relatively much less known about approximate nearest neighbor search (\textsc{ANNS}) among curves. Broadly speaking, curve data can be derived from a sequence of time stamped locations, so called \emph{trajectories}. Trajectories can be seen as piecewise linear functions since part of the movement tracked by the GPS between every two consecutive locations can be linearly interpolated. This can also be seen as a polygonal curve in the literature of computational geometry. 
	This steers the research to the point where the idea of developing data structures for curves as more complicated geometric data compared to points arises. Recently, the nearest neighbor search among curves has received a noticeable amount of attention. 
	The very first inspiring steps towards stating the nearest neighbor problem between trajectories (sequences of time-stamped locations) were initiated by Langran \cite{langran-tgis-99} and Lorentzos \cite{lorentzos-fermrmgi-88} that are more focused on indexing spatiotemporal data in large databases. 
	Not surprisingly research had mainly focused on indexing databases so that basic queries concerning the data can be answered efficiently. 
	At the time, the most common queries considered in the literature were different variants of nearest neighbor queries and range searching queries. Considering the nearest neighbor problem for trajectories requires the use of an appropriate distance for trajectories. 
	One of the most popular distances between polygonal curves that has interested many researchers over the past two decades is the Fr\'echet distance~\cite{f-sqpcf-06}. 
	Fr\'echet distance is a very well-known metric and has applications in many areas, e.g., in morphing~\cite{eghmm-nsmbp-02}, movement analysis~\cite{glw-mpstd-07}, handwriting recognition~\cite{skb-fdbas-07} and protein structure alignment~\cite{jxz-pssad-08}. It is intuitively the minimum length of the leash that connects a man and a dog walking across the curves without going backward. Alt and Godau~\cite{ag-cfdb-95} were the first who computed the Fr\'echet distance between curves with total complexity $n$ and their algorithm runs in $O(n^2 \log n)$ time. 
	For the \emph{discrete} Fr\'{e}chet distance, which considers only distances between the vertices, Agarwal et al.~\cite{aaks-cdfds-14} gave an $O\big(n^2 \frac{\log\log n}{\log n}\big)$ time algorithm. 
	
	There are different complementary variants of the approximate ``near-neighbors'' search (\textsc{ANNS}) problem among curves under the Fr\'echet distance such as approximate subcurve range searching (\textsc{ASRS}) and approximate subcurve range counting (\textsc{ASRC}). Such problems are closely related and they might be as difficult as the \textsc{ANNS}.
	While such problems are studied in the community of computational geometry there might be different titles devoted to them in the literature. 
	In this paper, we primarily set out to \textsc{ANNS} problem and we propose the first data structure that approximately handles the \textsc{ANNS} queries under the continuous Fr\'echet distance. The data strucure can be extended to handle various types of queries under the discrete Fr\'echet distance as well as the \textsc{ASRS} queries. 
	We will illuminate the backgrounds and existing results on the aforementioned problems later in Section~\ref{subsec:relatedwork}. The related problems that are commonly considered under the Fr\'echet distance are as follows:
	\\
	
	\noindent {\bf Problem 1 (\textsc{$(1+\eps)\delta$-ANNS}):}
	We are given a set $\P = \{P_1,\cdots, P_n\}$ of polygonal curves in $\Reals^d$ of size at most $m$ each and a real value $\delta>0$, we preprocess $\P$ into a data structure so that for any query curve $Q$ of size $k$, if the data structure returns a curve $P\in \P$ then its Fr\'echet distance to $Q$ is at most $(1+\eps)\delta$, and otherwise it is greater than $\delta$, for any $\eps>0$. 
	\\
	
	\noindent {\bf Problem 2 (\textsc{$(1+\eps)\delta$-ASRS}):} 
	We are given a polygonal curve $P$ of size $n$ in $\Reals^d$ and a range parameter $\delta>0$, we preprocess $P$ into a data structure so that for any query curve $Q$ of size $k$, if the data structure \emph{reports} and \emph{counts}, respectively, all subcurves $P' \subseteq P$ then their Fr\'echet distances to $Q$ are at most $(1+\eps)\delta$ and greater than $\delta$, otherwise. 
	\\
	
	Aside from \textsc{ANNS} queries among curves, spatiotemporal queries have attracted attention from researchers in GIS. One of the related problems in this area is time-window data structure for spatial density maps (\textsc{TWD}) that we investigate in this paper as well. 
	\\
	
	\noindent {\bf Problem 3 (\textsc{$(1+\eps)$-TWD}):}
	We are given a map $M$ consisting of a set of subdivisions (regions) $\R = \{r_1, \cdots, r_m\}$ and a time-stamped point set $\S = \{ (s_1, t_1) , \cdots, (s_n, t_n)\}$, where each point $s_i \in \Reals^d$ appears at time $t_i \in \Reals^{+}$ on the map, and an integer $\theta>0$. The aim is to preprocess $\S$ and $\R$ into a data structure so that for any query time-window $W = [q_1,q_2]$,  those regions in $\S$ are returned that contain at least $\theta$ points whose times are within $W$. 
	\\

	\subsection{Related work} \label{subsec:relatedwork}
	The first known result on \textsc{ANNS} for sequences of points (curves) under the discrete Fr\'echet distance has been obtained by Indyk~\cite{i-anndfpm-02}. For any parameter $t>1$, the data structure in \cite{i-anndfpm-02} uses $O\big(m^2|U|\big)^{tm^{1/t}}n^{2t}$ space and answers queries under the discrete Fr\'echet distance in $O\big(m+\log n\big)^{O(t)}$ query time within the approximation factor of $O\big((\log m +\log \log n)^{t-1}\big )$. Here $U$ is the ground set of the metric space in which the images of the input curves are defined. This data structure for $t=1+o(1)$, provides a constant factor approximation. Until recently in 2017, there was not much consideration of the problem under the Fr\'echet distance. Then Driemel and Silvestri~\cite{ds-lshfc-17} developed a locality-sensitive hash under the Fr\'echet distance. Their data structure is randomized and of size $O(n\log n + nmd)$ and can answer queries in $O(kd\log n)$ time within an approximation factor of $O(k)$. If they increase the space usage and query time to $O(2^{4md}n)$ and $O(2^{4md}\log n)$, respectively, then they can obtain an $O(d^{1.5})$ approximation factor. Subsequently, Emiris and Psarros~\cite{ep-pemapqmc-18} proposed a randomized data structure with approximation factor of $(1+\eps)$. 
	
	 Recently, Driemel et al. \cite{dps-sdfsfq-19} and Filtser et al.~\cite{ffk-anncsed-19} concurrently obtained randomized data structures under the \emph{discrete Fr\'echet} distance that can answer queries in $O(kd)$ query time within a $(1+\eps)$ approximation factor. While their space and construction times are quite different, their ideas in handling queries are somewhat similar as they both use  \emph{rounding} to speed up the query algorithm. The data structure by Driemel et al.~\cite{dps-sdfsfq-19} is randomized and has $O(kd)$ query time and can be derandomized with a significantly increased query time of $O\big(d^{5/2}k^2\eps^{-1}(\log n+kd\log(kd/\eps))\big)$. The only requirement of their data structure is that $k$, the size of the query, has to be provided in the preprocessing stage. Filtser et al.~\cite{ffk-anncsed-19} show that under the same assumption one can construct a data structure, in $n\cdot O(\frac{1}{\eps})^{kd}$ size and $nm\cdot \big(O(d\log m)+O(\frac{1}{\eps})^{kd}\big)$ expected time. In their terminology, this variant of the setting, when $k$ is given in the preprocessing stage, is called \emph{asymmetric}. In the \emph{symmetric} setting, when $k$ is assumed to be part of the query, they provide a data structure of size $n\cdot O(\frac{1}{\eps})^{md}$ and $n\cdot O(\frac{1}{\eps})^{md}$  expected construction time with $O(md)$ query time.  In the asymmetric deterministic case their construction time and query time increase to $nm\log(\frac{n}{\eps})\cdot (O(d\log m)+O(\frac{1}{\eps})^{kd})$ and $O(kd\log(\frac{nkd}{\eps}))$, respectively. Also in the symmetric deterministic case their construction and query times are  $n\log(\frac{n}{\eps})\cdot O(\frac{1}{\eps})^{md}$ and $O(md\log(\frac{nmd}{\eps}))$, respectively.
	 However, in the symmetric case, they have prior knowledge about the query size beforehand. What they assume at this point is that $Q$ has size at most $m$, i.e., $k\ll m$ in order to make their data structure work. 
	The construction of their data structure is randomized, therefore it results in expected preprocessing time.  Driemel et al. \cite{dps-sdfsfq-19} only considered the case when $k$ is known as part of the preprocessing, i.e., the asymmetric case. Very recently, Driemel and Psarros~\cite{dp-2anntsfd-21} revisited the problem for \emph{time series curves} (in $\Reals^1$) and gave a data structure of size $n\cdot O(\frac{m}{k\eps})^k+O(nm)$ that has $O(k2^k)$ query time and the approximation factor of $(2+\eps)$. 
	An attempt to summarize the existing and our results on the \textsc{ANNS} can be found in Table~\ref{tab:ANN_Res}.
	
	\begin{table}[!t]
		\centering 
		
		\begin{tabular}[c]{|l|c|c|c|c|}
			\hline 
			Result & Approx. & Query time & Space & Assumption \\ 
			\hline \hline
			
			det.~\cite{i-anndfpm-02}& $O(1)$ & $O(m+\log n)^{O(1)}$   & {\small $O(m^2|U|)^{m^{1-o(1)}}\cdot O(n^{2-o(1)})$} & dF, sym.\\
			\hline
			
			rand.~\cite{ds-lshfc-17}& $O(k)$ & $O(kd\log n)$   & $O\big(n(md+\log n)\big)$ & {dF, sym.} \\
			\hline
			
			rand.~\cite{ds-lshfc-17}& $O(d^{1.5})$ & $O(2^{4md}\log n)$   & $O(2^{4md}n)$ &  {dF, sym.} \\
			\hline
			
			rand.~\cite{ep-pemapqmc-18}& $ 1+\eps $ & $O(dm^{(1+\frac{1}{\eps})} 2^{4m} \log n )$   & $\hat{O}(n)\cdot (2+\frac{d}{\log m})^\lambda$  & {dF, sym.} \\
			\hline
			
			rand.~\cite{ffk-anncsed-19}& $ 1+\eps $ & $O(kd)$   & $n\cdot O(\frac{1}{\eps})^{kd}$ & {dF, asym.} \\
			\hline
			
			rand.~\cite{ffk-anncsed-19}& $ 1+\eps $ & $O(md)$   & $n\cdot O(\frac{1}{\eps})^{md}$ & {dF, sym.} \\
			\hline
			
			rand.~\cite{dps-sdfsfq-19}& $ 1+\eps $ & $O(kd)$   & $n\cdot O(\frac{kd^{3/2}}{\eps})^{kd}$ & dF, asym.\\
			\hline
			
			det.~\cite{dps-sdfsfq-19}& $ 1+\eps $ & $O\Big(\frac{\log n+kd\log(kd/\eps)}{\eps d^{-5/2}k^{-2}}\Big)$  & $(\frac{nkd}{\eps}) \cdot O(\frac{kd^{3/2}}{\eps})^{kd}$ & dF, asym. \\
			\hline
			
			det.~\cite{ffk-anncsed-19}& $ 1+\eps $ & $O(kd\log(\frac{knd}{\eps}))$   & $n\cdot O(\frac{1}{\eps})^{kd}$ & {dF, asym.} \\
			\hline
			
			det.~\cite{ffk-anncsed-19}& $ 1+\eps $ & $O(md\log(\frac{nmd}{\eps}))$   & $n\cdot O(\frac{1}{\eps})^{md}$ & {dF, sym.} \\
			\hline
			
			det.~\cite{dp-2anntsfd-21}& $ 2+\eps $ & $O(k2^k)$   & $n\cdot O(\frac{m}{k\eps})^k+O(nm)$ & {F, asym. $\Reals^1$} \\ 
			
			\hline\hline 
			
			det. Thm~\ref{thm:DS}& $ 1+\eps $ & $O(kd)$  & $n\cdot O(\max\{(\frac{\sqrt{d}}{\eps}),(\frac{\D\sqrt{d}}{\eps^2})\}^{kd})$  & F, asym. $\Reals^d$\\
			\hline
			
			det. Thm~\ref{thm:exDS}& $ 5+\eps $ & $O(kd)$  & $n\cdot O\big(\frac{1}{\eps}\big)^{md}$  & dF, sym. \\
			\hline
		\end{tabular}
		\caption{Results on \textsc{ANNS} under the (discrete) Fr\'echet distance. Here, $\lambda= {O(m^{(1+\frac{1}{\eps})}\cdot d \log(\frac{1}{\eps}))}$, and $\hat{O}$ hides factors polynomial in $1/\eps$. (d)F stands for the (discrete) Fr\'echet distance, rand. and det. stand for randomized and deterministic data structures, respectively.}
		\label{tab:ANN_Res}
	\end{table}
	
Previous research on \textsc{ANNS} under the Fr\'echet distance has been closely tied to developing data structures for a single curve. 
Designing data structures for one curve only has been initiated by a few important works~\cite{dh-jydcf-13,bcg-ffq-13,gs-fqgt-15}, where the type of queries varies based on the application. Compared to the \textsc{ANNS}, data structures for the single curve variant might seem relatively basic, however it might help us to solve the \textsc{ANNS} for multiple curves more efficiently. 
In 2013, Driemel and Har-Peled~\cite{dh-jydcf-13} considered approximate Fr\'echet distance queries (\textsc{AFD}) and presented a near-linear size data structure for an input curve $P$ such that for any given query curve $Q$ with $k$ vertices, and a subcurve $P'\subseteq P$, an $O(1)$-approximation of the Fr\'{e}chet distance between $Q$ and $P'$ can be computed in $O(k^2 \log n \log(k \log n))$ time. 
Concurrently, de Berg et al.~\cite{bcg-ffq-13}  considered $\delta$-\textsc{ASRC}, a slightly different setting than \cite{dh-jydcf-13}, and they gave a near quadratic-size data structure that can `approximately' count the number of subcurves of $P$, in polylogarithmic query time, whose Fr\'{e}chet distance to a given query segment $Q$ is at most $\delta$. 
Subsequently, Gudmundsson and Smid~\cite{gs-fqgt-15} considered the $\delta$-\textsc{ASRS} problem for a special case where the input curve is $c$-packed. A curve is $c$-packed if for any ball of radius $r>0$, the length of the curve contained inside of the ball is at most $2cr$. They showed for any constant $\eps > 0$, a data structure of size $O((c+(1/\eps^{2d})\log^2 (1/\eps)) n)$ can be built in $O(((1/\eps^{2d})\log^2 (1/\eps)) n \log^2 n + cn\log n )$ time, such that for any polygonal query path $Q$ of size $k$ the query algorithm answers the \textsc{$3(1+\eps)\delta$-ASRS} in $O((c^2/\eps^4) k \log n)$ time. 
A while after, de Berg et al.~\cite{dmo-dsfqtd-17} and Gudmundsson et al.~\cite{gmmw-ffdbcle-19} considered the Fr\'echet distance (\textsc{FD}) and $\delta$-Fr\'echet distance decision ($\delta$-\textsc{FD}) queries, respectively, and achieved $O(\log^2 n)$ query time under some specific assumptions. 
A summary of the existing results and our result can be found in Table \ref{tab:ASRS_Res}. Aside from developing near-neighbors data structures, we set out to approximate time-window query data structure in spatial density maps (heatmaps) among points. Heatmaps are robust tools for visualizing the distribution and density of different attributes over different layers of the map in GIS. Recently, Bonerath et al.~\cite{bndooh-twdsspd-20} studied the time-window queries for spatial density maps and they obtained a data structure of size $O(n \log m + m)$ and preprocessing time $O(n(\log n + \log m)+m)$ that can answer the queries in $O(k\log m + \log n)$, where $n$ and $m$ are the number of points and regions, respectively, and $k$ is the number of subdivisions in $\R$ containing points that fulfill $W$.

\begin{table}[!t]
	\small
	\centering 
	\begin{tabular}[c]{|l|c|c|c|c|} 
		\hline 
		Data structure & Approx. &Query time & Space& Assumption \\
		\hline\hline
		$\delta$-\textsc{ASRC}~\cite{bcg-ffq-13} & 2+3$\sqrt{2}$ & $O(\frac{n}{\sqrt{s}}\log^{O(1)}n)$ & $O(s\log^{O(1)}n)$ & $Q$:long seg, $\Reals^2$ \\
		\hline
		\textsc{AFD}~\cite{dh-jydcf-13} & $O(|U|)$\tablefootnote{In the proof of Lemma 6.8 in \cite{dh-jydcf-13} it is shown that the approximation factor is $O(r)$, where $r$ is the Fr\'echet distance between $Q$ and $P$. Clearly $r$ could be $\Omega(1)$, however one can bound the domain $U$ of the space, i.e., the diameter where the curve lives, so that $r\leq |U|= O(1)$.} & $O(k^2\log n \log( k\log n))$ & $O(n\log n)$ & $\Reals^d$ \\
		\hline
		$\delta$-\textsc{ASRS}~\cite{gs-fqgt-15} & $3(1+\eps)$ & $O(\frac{c^2}{\eps^4} k \log n)$ & $O\Big(n\big(c+\frac{\log^2 (\frac{1}{\eps})}{\eps^{2d}}\big) \Big)$ & $P$:$c$-packed, $\Reals^d$ \\
		\hline
		\textsc{FD}~\cite{dmo-dsfqtd-17} & 1 & $O( \log^2 n)$ & $O(n^2\log^2 n)$ & $Q$:segment, $\Reals^2$ \\
		\hline
		$\delta$-\textsc{FD}~\cite{gmmw-ffdbcle-19} & 1 & $O(k\log^2 n)$ & $O(n \log n)$ & $Q$:long, $\Reals^2$ \\
		\hline \hline 		
		\textsc{ASRS},Thm~\ref{thm:ASRS} & $1+\eps$ & $O(k)$ &  ${\tiny n \cdot O(\max\{(\frac{1}{\eps}), (\frac{\D}{\eps^2})\}^{kd})}$
		& $\Reals^d$, $k$ is given \\
		\hline
	\end{tabular}
	
	\caption{Known results and our result on data structures for \textsc{$\delta$-ASRS}, \textsc{$\delta$-ASRC}, \textsc{$\delta$-FD}, \textsc{AFD}, \textsc{FD} under the continuous Fr\'echet. Here $s$ is a parameter where $n\leq s \leq n^2$, and $|U|$ is interpreted as the diameter of the metric space where the curve is defined. 
	}
	\label{tab:ASRS_Res}
\end{table}

	\subsection{Our contribution}
In this paper we present a generic data structure that can approximately handle different variants such as \textsc{ANNS} in both symmetric and asymmetric ways as well as the \textsc{ASRS} problem. 
In general, the advantages of our data structure are as follows: (1) it works under the continuous Fr\'echet distance in $\Reals^d$, (2)  our data structure is fully deterministic, and (3) it is simpler than the approaches proposed in~\cite{dps-sdfsfq-19,ffk-anncsed-19, dp-2anntsfd-21}.
Our data structure is fully deterministic, meaning that both our preprocessing and query algorithms run deterministically. In the asymmetric case,
in time $nmkd \cdot  O\big(\max\big\{\big(\frac{\sqrt{d}}{\eps}\big)^{kd}, \big(\frac{\D\sqrt{d}}{\eps^2}\big)^{kd}\big\}\big)$ we build a data structure of size $n\cdot  O\big(\max\big\{\big(\frac{\sqrt{d}}{\eps}\big)^{kd}, \big(\frac{\D\sqrt{d}}{\eps^2}\big)^{kd}\big\}\big)$ such that for any polygonal query curve $Q$ of size at most $k$, it can answer \textsc{$(1+\eps)\delta$-ANNS} queries under continuous Fr\'echet distances in $O(kd)$ time. In the symmetric case, when $k$ is part of the query, we propose a data structure of size $n\cdot O\big(\frac{1}{\eps}\big)^{md}$ and construction time $ nm \cdot O\big(\frac{1}{\eps}\big)^{md}$ that answers \textsc{$(5+\eps)\delta$-ANNS} queries in $O(kd)$ time under the discrete Fr\'echet distance only.

To the best of our knowledge, only the data structure in~\cite{dp-2anntsfd-21} handles the ANNS among curves under the \emph{continuous Fr\'echet distance} but solely in $\Reals^1$. As mentioned above, the space of our data structure depends on the diameter $\D$ of the input curves. This is consistent with the result in~\cite{dp-2anntsfd-21}; any data structure that achieves an approximation factor less than 2 and supports curves of arclength at most $\D$, uses space of size $\D^{\Omega(k)}$. 
It is not obvious at all how to avoid this when approximating the solution within $(1+\eps)$ factor. 
The main idea of our construction is to use a grid of limited side length and store the near neighbor curves with respect to all possible ways of placing a query curve $Q$ onto the grid cells. Once $Q$ is given all we need is to search for a sequence of grid cells that contain the vertices of $Q$. We then report the preprocessed curves that are near neighbors to those cells that are close enough to the vertices of $Q$. The main ingredient of our approach is somewhat similar to \cite{dps-sdfsfq-19} and \cite{ffk-anncsed-19}, although the main challenge is to preprocess the curves and using a bounded uniform grid by discarding invalid placements of the query curve in order to achieve a deterministic data structure with faster query time. 

In the \emph{symmetric deterministic} case, we give an alternative data structure which is somewhat a trade-off for \cite{ffk-anncsed-19} and much simpler than the existing ones, however, Filtser et al. consider a symmetric case in which the size of the query is bounded by $m$ as discussed earlier. In our symmetric setting, which is slightly different and more general, our data structure has no prior knowledge about the size of the query in the preprocessing stage and $k$ can be either greater or less than $m$. Therefore, our data structure is `flexible' (still working) with respect to the value of $k$ and this flexibility adds a constant multiplicative factor of $5+\eps$ to the approximation.
The data structure can be extended to handle queries for the \textsc{$(1+\eps)\delta$-ASRS} problem. As mentioned, de Berg et al.~\cite{bcg-ffq-13} consider the $\delta$-ASRC problem, where the query is a single segment while our data structure can answer \textsc{$(1+\eps)\delta$-ASRS} for polygonal query curves. It has size of $n\cdot O\big(\max\big\{\big(\frac{1}{\eps}\big)^{kd}, \big(\frac{\D}{\eps^2}\big)^{kd}\big\}\big)$ and construction time of $nk\cdot O\big(\max\big\{\big(\frac{1}{\eps}\big)^{kd}, \big(\frac{\D}{\eps^2}\big)^{kd}\big\}\big)$ that can answer such queries in $O(k)$ time within $(1+\eps)$ approximation factor where, 
$d$ is assumed to be constant similar to \cite{bcg-ffq-13}. 
Finally, we consider the \textsc{TWD} problem and we show that for any $\eps>0$, a data structure of size $O(m/\eps^2)$ can approximately, in $O(1)$ query time, return two solutions $S_1$ and $S_2$ w.r.t. $[(1+\eps)q_1,(1-\eps)q_2]$ and $[(1-\eps)q_1,(1+\eps)q_2]$, respectively, such that $S_1 \subseteq S^*\subseteq S_2$, where $S^*$ is a solution w.r.t. the query window $W=[q_1,q_2]$. 

For the sake of simplicity, throughout the paper, we remove the number of reported curves (samples) from the runtimes since the query algorithms for \textsc{ANNS}, \textsc{ASRS}, \textsc{TWD} are output sensitive.
	
\section{Preliminaries} \label{sec:preliminary}
Let ${P=\langle p_1,p_2,\cdots, p_m \rangle}$ be an input
polygonal curve with $m$ vertices.
We treat $P$ as a continuous
map $P:[1,m] \rightarrow \mathbb{R}^d$, where $P(i)=p_i$ for integer $i$,
and the $i$-th edge is linearly parametrized as $P(i + t) =
(1-t) p_i + tp_{i +1}$,
for any $0 < t < 1$. 
We 
denote the {\em segment}, i.e., the  straight line connecting two points $p$ and $q$ on $P$, by $\segment{pq}$.  
The \emph{\Frd} between two polygonal curves $P$ and $Q$, with $m$ and $k$ vertices, respectively, is: 
${\delta_F(P, Q)}= \inf_{(\sigma,\theta)} \max_{t\in [0,1]} \| P(\sigma(t))-Q(\theta(t))\|,$ 
where $\sigma$ and $\theta$ are continuous non-decreasing functions from $[0,1]$ to $[1,m]$ and $[1,k]$, respectively with $\sigma(0)=\theta(0)=1$, $\sigma(1)=m$ and $\theta(1)=k$.
Finally, the {\em discrete \Frd} ($\delta_{dF}(P,Q)$) is a variant where $\sigma$ and $\theta$ are discrete functions from $\{1, \ldots, l\}$ to $\{1, \ldots, m\}$ and $\{1, \ldots, k\}$ with the property that  $|\sigma(i)-\sigma(i+1)|\le 1$ and $|\theta(i)-\theta(i+1)|\le 1$ for any $i\in \{1,\cdots,l-1\}$.

Let $\P=\{P_1,\cdots, P_n\}$ be  a set of $n$ polygonal curves such that each curve $P\in \P$ consists of $m$ vertices in $\Reals^d$ and $U\subseteq \Reals^d$ be the ground set of the metric space where the curves in $\P$ are defined. Let $V(P)$ denote the vertex set of $P \in \P$. Then $V(\P)$ is the set of vertices over all curves belonging to $\P$, i.e., $V(\P)= \cup_{ P\in \P}V(P)$.
The diameter $\D$ of $\P$ is then induced by the farthest pair of points in $V(\P)$. 

Let ${\partition(U, \ell)}$ be a {\em partitioning} of $U$ into a set of disjoint cells (hypercubes) of side length $\ell$ that is induced by axis-parallel hyperplanes placed consecutively at distance $\ell$ along each axis. This induces a \emph{grid region} $\G$. Intersecting $\G$ with an axis-parallel hypercube of side length $\L$ gives us a \emph{bounded} grid region of side length $\L$ whose grid cells have side length $\ell$. Clearly $\N=(\lceil \L/\ell\rceil)^d$ is the number of cells in $\G$. For any point $u\in U$ and real $r>0$, the ball centered at $u$ of radius $r$ is denoted by $B(u, r)$.

\section{Asymmetric \textsc{ANNS} under the Continuous Fr\'echet Distance} \label{sec:asymDS}
In this section we consider the decision version of the \textsc{ANNS} problem for polygonal curves under the continuous Fr\'echet distance. Our data structure is deterministic and simple.
	
\subsection{The main sketch}

What needs to be understood is the fact that the entire difficulty of the problem boils down to grid usage. 
More precisely, in our design, we use the following perspective: if a bounded grid is placed properly to enclose the entire curves in $\P$, then a brute force way of preprocessing the curves is efficient, necessary and sufficient. This way we can use the `rounding' idea similar to \cite{dps-sdfsfq-19} and \cite{ffk-anncsed-19} to achieve the fastest query time. The whole idea is feasible only by approximating the diameter of $\P$ and constructing a  bounded grid whose side length is proportionally equal to either $\delta$ or the diameter.

We now describe the construction of our data structure.  While the complexity of the query (which is at most some positive integer value $k$) must be known in this stage, the idea is to compute all paths of size $k$ whose vertices are grid points and distances are at most $(1+\eps/2)\delta$ to every input curve. 
Those input curves that are validated to have the \Frd~at most $(1+\eps/2)\delta$ to such grid paths, are stored in a hashtable by their indices. Note that every grid point has a unique id. Each bucket of the hashtable is associated with one unique grid path whose id is simply the concatenations of grid points' ids with respect to the order in which the grid points are encountered along the grid path. These concatenated ids are also called the hash codes. We need these ids so that we can retrieve the bucket of interest whose grid path is closest to the query curve. We will describe this in our query algorithm in \secref{subsec:query}. 
We begin with the preprocessing algorithm below and then we show some of the technical details that help us to analyze the data structure. 

\subsection{Preprocessing algorithm}  \label{subsec:prep}

The input to this procedure is $\P$, $k$ and $\delta$. The steps are described in \algref{alg:prep}:

\begin{algorithm}[htbp]
	\DontPrintSemicolon
	\SetKwFunction{Preprocessing}{\textsc{PreprocessingAlgorithm}}
	\caption{The Preprocessing Algorithm}
	\label{alg:prep}
	
	\BlankLine 
	\Preprocessing{$\P,k,\delta,\eps$}:
	
	Approximate $\D$ greedily: start from an arbitrary vertex in $V(\P)$ and find the farthest point to it. Call this value $\D'$ \label{step:diam}
	
	\lIf{$\D' \leq \delta$}{set $\L= 4\delta$ and $\ell= \eps\delta/(2\sqrt{d})$} \label{step:smalldiam}
	\lElseIf{$\D' > \delta$}{set $\L= 4\delta\D'/\eps$ and $\ell= \eps\delta/(2\sqrt{d})$}\label{step:largediam}
	Build $\G$ of side length $\L'=2\L$ and grid cells of side length $\ell$ centered at an arbitrary vertex in $V(\P)$ \label{step:grid}
	
	\ForAll{\mbox{sequences of grid points (grid paths) $\C = \segment{c_1,\cdots,c_k}$ in $\G$}}{\label{step:sequence}
		\ForAll{$P \in \P$}
		{\label{step:curves}
			\If{$\delta_F(P,\C)\leq (1+\eps/2)\delta$}{\label{step:frechet}
				Store the index of $P$ into the bucket of id associated with  $\C$ \label{step:store}} 
		}
	}
	
	
\end{algorithm}
In line~\ref{step:diam}  we approximate the diameter $\D$. The classic 2-approximation algorithm for finding the diameter of $N$ points in $\Reals^d$, with running time $O(Nd)$, is to choose an arbitrary point and then return the maximum distance to another point. The diameter is no smaller than this value and no larger than twice this value (\lemref{lem:diam}). In lines~\ref{step:smalldiam}-\ref{step:largediam} we check whether the approximate diameter $\D'$ is larger than $\delta$ or not and we set the appropriate values to $\L$ and $\ell$ in each case that help us to approximate the  solution later. Suppose $\L$ is the minimum side length under which if $\G$ is centered appropriately, can contain the entire $V(\P)$. In line~\ref{step:grid} we double the side length $\L$ in order to fit the whole input curves in $\P$ within the grid space $\G$ of side length $\L'=2\L$. This is because the center of the grid, a vertex in $V(\P)$, might be a maximum point with respect to some axis in $\Reals^d$ and hence the grid of side length $\L$ may not be able to cover the entire $\P$ along the axis. Doubling the length fixes the problem. Further details on this are mentioned in the approximation section. Finally in line~\ref{step:sequence} we consider all grid paths $\C$ with $k$ grid points  and in line~\ref{step:curves}-\ref{step:store} we store the indices of those curves $P\in \P$ that are approximately close to $\C$ (line \ref{step:frechet}) in an appropriate bucket of the hashtable (line \ref{step:store}). This can be done by associating distinct integers as ids to the grid points in $\G$. Once a grid path $\C$ (a sequence of grid points) is selected (line \ref{step:sequence}) the id associated with $\C= \segment{c_1,\cdots,c_k}$ is simply the concatenation of the id associated with each $c_i$ for $1\leq i\leq k$.
Now we are ready to analyze our data structure below:
\\

\noindent{\bf Preprocessing time:} 
In line \ref{step:diam} of  \algref{alg:prep} we first approximate the diameter $\D$ of $nm$ points. As described earlier we can compute $\D'$ in $O(|V(P)|)= O(nmd)$ time. Lines~\ref{step:smalldiam}-\ref{step:largediam} together take constant time. Constructing $\G$ in line~\ref{step:grid} takes $O(\N)$ time, where:

\begin{center}
	$\N:=\big(\frac{\L'}{\ell}\big)^d= \big(\frac{2\L}{\ell}\big)^d\leq \max\Big\{\Big(\frac{16d^{1/2}}{\eps}\Big)^d, \Big(\frac{16\D'd^{1/2}}{\eps^2}\Big)^d\Big\},\mbox{for all $\eps>0.$}$ 
\end{center}

The expensive part of \algref{alg:prep} is between lines~\ref{step:sequence}-\ref{step:store}. There are: $O(2^{dk}\N^k)$ 
grid paths in $\G$ (line~\ref{step:sequence}). Deciding the Fr\'echet distance between two curves of size $k$ and $m$ takes $O(kmd)$ time~\cite{ag-cfdb-95}, therefore given $n$ curves in $\P$ (line~\ref{step:frechet}) of size at most $m$ we have
$O\big(nmkd \cdot 2^{kd}\N^{k} \big).$ 
Finally line \ref{step:store} only takes $O(1)$. Plugging $\N$ into the latter upper bound yields the following lemma:

\begin{lemma} \label{lem:prep}
	The preprocessing time is $nmkd \cdot O\Big(\max\big\{\big(\frac{\sqrt{d}}{\eps}\big)^{kd}, \big(\frac{\D'\sqrt{d}}{\eps^2}\big)^{kd}\big\}\Big)$.
\end{lemma}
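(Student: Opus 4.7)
The plan is to walk through \algref{alg:prep} line by line, charging a cost to each step, and then sum. Line~\ref{step:diam} approximates $\D$ by one pass over $V(\P)$, which has $O(nm)$ points in $\Reals^d$, costing $O(nmd)$. Lines~\ref{step:smalldiam}--\ref{step:largediam} only assign scalars and are $O(1)$. Building $\G$ in line~\ref{step:grid} costs $O(\N)$, where $\N=(\L'/\ell)^d=(2\L/\ell)^d$. The dominant term will come from the nested loops on lines~\ref{step:sequence}--\ref{step:store}, so the crux is bounding those.

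First I would bound $\N$ in the two regimes created by lines~\ref{step:smalldiam}--\ref{step:largediam}. When $\D'\le\delta$, $\L=4\delta$ and $\ell=\eps\delta/(2\sqrt{d})$, so $2\L/\ell=O(\sqrt{d}/\eps)$ and hence $\N=O\bigl((\sqrt{d}/\eps)^d\bigr)$. When $\D'>\delta$, $\L=4\delta\D'/\eps$ with the same $\ell$, so $2\L/\ell=O(\D'\sqrt{d}/\eps^2)$ and $\N=O\bigl((\D'\sqrt{d}/\eps^2)^d\bigr)$. A uniform upper bound is therefore $\N\le \max\bigl\{O((\sqrt{d}/\eps)^d),\,O((\D'\sqrt{d}/\eps^2)^d)\bigr\}$. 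Next, the number of ordered $k$-tuples of grid points enumerated in line~\ref{step:sequence} is $O(2^{dk}\N^k)$, since the grid has at most $O(2^d\N)$ corner points and each entry of the tuple is chosen independently. For each such $\C$ and each $P\in\P$, line~\ref{step:frechet} invokes the decision version of Alt--Godau's Fr\'echet algorithm \cite{ag-cfdb-95} on curves of sizes $k$ and at most $m$, running in $O(kmd)$, while line~\ref{step:store} performs an $O(1)$ hash insertion.

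Multiplying the per-iteration cost $O(kmd)$ by the iteration count $|\P|\cdot O(2^{dk}\N^k)=O(n\cdot 2^{dk}\N^k)$ gives $O(nmkd\cdot 2^{dk}\N^k)$, which already subsumes the $O(nmd+\N)$ start-up cost. Substituting the uniform bound on $\N$ and folding the $2^{dk}$ factor into the constants inside each branch of the maximum yields $nmkd\cdot O\bigl(\max\{(\sqrt{d}/\eps)^{kd},\,(\D'\sqrt{d}/\eps^2)^{kd}\}\bigr)$, which is precisely the claimed expression. I do not anticipate a real obstacle: the only mildly delicate step is the case analysis for $\N$, and the rest is pure book-keeping in the $O(\cdot)$ notation.
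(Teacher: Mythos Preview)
Your proposal is correct and follows essentially the same approach as the paper: a line-by-line cost analysis of \algref{alg:prep}, the same two-case bound on $\N$, the same $O(2^{dk}\N^k)$ count of grid paths, the same $O(kmd)$ Alt--Godau decision cost per pair, and the same absorption of the $2^{dk}$ factor into the big-$O$ of the maximum. If anything, you supply slightly more justification for the $2^{dk}$ factor than the paper does.
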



\noindent{\bf Space:} 
The space required is only for storing the indices of curves in $\P$ whose Fr\'echet distances to all grid paths $\C$ are at most $(1+\eps/2)\delta$ (line~\ref{step:store}). Therefore the space is proportional to the number of grid paths times the number of indices stored in each bucket associated with each grid path id: 
$O\big(n2^{kd}\N^{k}\big).$
Plugging $\N$ into the latter upper bound yields the following lemma:

\begin{lemma}  \label{lem:space}
	For any $\eps>0$, the space required is $n \cdot O\Big(\max\big\{\big(\frac{\sqrt{d}}{\eps}\big)^{kd}, \big(\frac{\D'\sqrt{d}}{\eps^2}\big)^{kd}\big\}\Big)$.
\end{lemma}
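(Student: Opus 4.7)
The plan is to read off the space bound directly from \algref{alg:prep}, using the observation that the data structure stores nothing beyond the curve indices deposited in \alglineref{step:store}. Each bucket is indexed by a grid path $\C$, and into that bucket the algorithm only ever stores indices from $\{1,\ldots,n\}$, so every bucket holds at most $n$ entries. Hence the total space is bounded by $O(n\cdot G)$, where $G$ is the number of grid paths $\C=\segment{c_1,\ldots,c_k}$ enumerated in \alglineref{step:sequence}. Since we use a hash table keyed by the concatenation of grid-point ids, only non-empty buckets consume memory, so no additional overhead is incurred by the addressing scheme.

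Next I would count $G$ from first principles. By construction, $\G$ has side length $\L'=2\L$ and cell side length $\ell$, so the number of cells is $\N=(\L'/\ell)^{d}$, and the number of grid \emph{vertices} is $O(2^{d}\N)$. A grid path of length $k$ is an ordered $k$-tuple of grid vertices (with repetition allowed), giving $G=O(2^{kd}\N^{k})$.

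It remains to substitute the two branches from \alglineref{step:smalldiam}-\ref{step:largediam}. When $\D'\leq\delta$ we have $\L=4\delta$ and $\ell=\eps\delta/(2\sqrt{d})$, so $\L'/\ell=16\sqrt{d}/\eps$, and $\N=(16\sqrt{d}/\eps)^{d}$. When $\D'>\delta$ we have $\L=4\delta\D'/\eps$ and the same $\ell$, so $\L'/\ell=16\D'\sqrt{d}/\eps^{2}$, and $\N=(16\D'\sqrt{d}/\eps^{2})^{d}$. Taking the maximum of the two cases, absorbing the $2^{kd}$ factor and all numerical constants into the big-$O$, yields
\[
n\cdot O(2^{kd}\N^{k}) \;=\; n\cdot O\!\left(\max\!\Big\{\Big(\tfrac{\sqrt{d}}{\eps}\Big)^{kd},\,\Big(\tfrac{\D'\sqrt{d}}{\eps^{2}}\Big)^{kd}\Big\}\right).
\]

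There is no real obstacle here; the argument is essentially arithmetic. The only point worth being careful about is that the ``$2^{kd}$'' factor from counting grid vertices (versus cells) and the constants $16$ arising from $\L'/\ell$ must be absorbed into the big-$O$ exponent base rather than appearing as a separate multiplicative factor, which is valid because these constants are independent of $\eps$, $d$, $k$, and $\D'$.
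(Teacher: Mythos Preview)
Your proposal is correct and follows essentially the same approach as the paper: bound the space by (number of grid paths) $\times$ (at most $n$ indices per bucket) $= O(n\cdot 2^{kd}\N^{k})$, then substitute the two cases for $\L'/\ell$ to obtain the stated maximum. The paper's argument is identical in structure, just slightly less explicit about the branch arithmetic.
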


\subsection{Query algorithm} \label{subsec:query}
	
In this section we present our query algorithm. Our query algorithm is very simple and makes use of the property of the grid we constructed together with the hashtable that consists of the near-neighbors of certain grid paths. The query algorithm is described as follows:
	\\
	\noindent{\bf Step (1):} Given $Q=\segment{q_1,\cdots,q_{k}}$ and $k$ as the input parameters to the algorithm, check if any vertex of $Q$ lies outside of $\G$. If so, terminate the algorithm. \label{step:far}
	\\
	\noindent{\bf Step (2):} Find a sequence of grid cells $\A=\segment{a_1,\cdots, a_{k}}$ in $\G$ where every vertex $q_i$ of $Q$ lies within $a_i$. \label{step:locating}  
	\\
	\noindent{\bf Step (3):} Find an arbitrary corner $c_i$ of $a_i$ for all $i \in \{1,\cdots, k\} $ and make a sequence of grid points $\C=\segment{c_1,\cdots,c_{k}}$ in $\G$.  \label{step:snapping}
	\\
	\noindent{\bf Step (4):} Return the indices of the curves that are stored in the bucket associated with the id of $\C$. \label{step:retrieving}
	\\
	
	\noindent We now analyze the performance of our query algorithm and its quality of approximation.
	\\
	
	\noindent{\bf Query time:} Step (1) checks whether the query curve is close enough to any of the curves in $\P$ or not. This can be done in $O(kd)$ time to check if all vertices of $Q$ lie within $\G$ or not. Step (2) computes all cells in $\G$ that contain the vertices of $Q$. This can be done by performing a binary search per vertex of $Q$ over grid cells while comparing every vertex's coordinates with halfplanes passing through the coordinate and then splitting the grid space into $2^d$ subgrid spaces and recurse into the appropriate subgrid that contains the vertex until we reach the single cell. This takes $O(kd\log \N)= \O\Big(kd\log\big(\max\{(\frac{\sqrt{d}}{\eps}),(\frac{\D\sqrt{d}}{\eps^2})\}^{kd}\big)\Big)= O\Big(k^2d^2\big(\log d+\log(\max\{(\frac{1}{\eps}),(\frac{\D}{\eps})\})\big)\Big)$.  \\
	Step (3) only takes $O(d)$ per vertex, hence $O(kd)$ time overall and 	Step (4) takes $O(s)$ where $s$ is the number of curves in the solution. As we mentioned at the very beginning we remove this from the runtime for more brevity. Therefore, the total runtime is dominated by Step (2) which is $O\Big(k^2d^2\big(\log d+\log(\max\{(\frac{1}{\eps}),(\frac{\D}{\eps})\})\big)\Big)$. 
	
	Note that our deterministic data structure in the query stage as it is now, already runs faster than the deterministic one proposed by Driemel and Psarros in~\cite{dps-sdfsfq-19}. But nevertheless we can use the \emph{`rounding'} idea similar to \cite{dps-sdfsfq-19,ffk-anncsed-19} in order to improve the query time to $O(kd)$ which is what both papers obtain in their randomized data structures.
	\\
	
	\noindent{\bf Improved query time:} The idea is to instead of performing a binary search and finding a sequence of cells containing the vertices of $Q$, use the coordinates of each vertex of $Q$ only to decide what grid point has the rounded coordinates of this vertex's coordinates. This can be done by scaling $\G$ properly in the preprocessing stage such that every cell attains unit side length, therefore every grid point would be of integer coordinates. We remember this scale and apply it to the coordinates of every vertex of $Q$ in the query stage later. Once a query curve is given, we scale the coordinates first and then round the coordinates of $Q$ to obtain a sequence of grid points. We then use the id of the obtained grid path to find the bucket in which the curves in $\P$ are stored. Clearly, the rounding takes $O(d)$ time per vertex in $Q$, so the running time for Step (2) and (3) is $O(kd)$ time overall. Step (4) takes $O(s)$ where $s$ is the number of curves in the solution. 

\begin{lemma}\label{lem:query}
	Given a query curve $Q$ of size at most $k$, the query algorithm runs in $O(kd)$ time.
\end{lemma}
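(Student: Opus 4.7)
The plan is to account for the cost of each of the four steps of the query algorithm separately and then sum them. The four steps operate only on the vertex sequence of $Q$ (of length at most $k$, each vertex in $\Reals^d$) and on a constant-time hashtable lookup, so I expect each step to cost $O(kd)$ or less, and the total to follow by addition.

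For Step (1), I would argue that checking whether every vertex of $Q$ lies inside the axis-parallel hypercube $\G$ amounts to comparing each of the $d$ coordinates of each of the $k$ vertices against the two boundary values of $\G$ along that axis; this is $O(d)$ per vertex and $O(kd)$ overall, with early termination if any vertex falls outside. For Step (3), once a cell is identified for $q_i$, picking an arbitrary corner is just copying $d$ coordinates, giving $O(kd)$ overall. Step (4) is a single hashtable lookup and a traversal of the bucket; since we agreed earlier (in the introduction) to charge output size separately, this contributes $O(1)$ to the query time.

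The main obstacle is Step (2), because the naive implementation (locating the cell of each vertex by binary search over $\G$) would cost $O(kd\log \N)$, which, after plugging in $\N$, becomes $O(k^2d^2(\log d+\log\max\{1/\eps,\D/\eps\}))$ — matching the preliminary bound stated in the excerpt but not $O(kd)$. The plan is to sidestep the binary search entirely by the \emph{rounding} trick already described: in the preprocessing stage, rescale $\G$ by a factor $1/\ell$ so that every cell has unit side length and every grid point has integer coordinates, store this single scaling factor, and in the query stage apply the scaling and a floor (or rounding) operation coordinate-wise to each vertex $q_i$ of $Q$. Locating the cell containing $q_i$ then reduces to $d$ arithmetic operations — a multiplication followed by a floor — which is $O(d)$ per vertex and $O(kd)$ over all $k$ vertices; Step (3) can actually be folded into the same computation since the floor already returns the corner of the containing cell. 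Moreover, the id of the resulting grid path $\C$ can be assembled in $O(kd)$ time from the $k$ integer corner tuples, so that Step (4)'s hashtable lookup (under a standard hash on the concatenated integer id) runs in $O(1)$ expected, or $O(kd)$ deterministically if one uses a perfect-hashing scheme built at preprocessing time; either way, the total query cost is $O(kd)$, establishing the lemma.
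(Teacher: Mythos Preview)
Your proposal is correct and follows essentially the same route as the paper: you account for Steps (1), (3), and (4) exactly as the paper does, and for Step (2) you identify the same obstacle (the naive binary search costs $O(kd\log\N)$) and resolve it by the same scaling-and-rounding trick the paper describes in its ``Improved query time'' paragraph. The only addition is your remark about perfect hashing for a deterministic lookup, which the paper leaves implicit.
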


\noindent{\bf Approximation:} In this part we examine the quality of approximation and show that the preprocessing and query algorithms presented in \subsecref{subsec:prep} and \subsecref{subsec:query} together solve the \textsc{$(1+\eps)\delta$-ANNS}. First we need to show some properties in the following technical lemmas:

\begin{lemma}[Approximate diameter]\label{lem:diam}
	Let $\D'$ be the approximate diameter described in \algref{alg:prep}. Then $\D/2\leq \D'\leq \D$.
\end{lemma}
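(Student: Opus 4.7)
The claim is the standard 2-approximation guarantee for the ``farthest-from-an-arbitrary-point'' heuristic, so I would simply verify both inequalities directly. Let $v \in V(\P)$ be the arbitrary vertex chosen in line~\ref{step:diam} of \algref{alg:prep}, and let $u \in V(\P)$ be a point attaining the maximum distance from $v$, so that $\D' = \|v - u\|$. Let $(p^*, q^*) \in V(\P) \times V(\P)$ be a diameter-realizing pair, so $\D = \|p^* - q^*\|$.

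The upper bound $\D' \leq \D$ is immediate: since $v, u \in V(\P)$, the pair $(v,u)$ is a candidate in the maximization defining $\D$, hence $\D' = \|v - u\| \leq \D$.

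For the lower bound $\D' \geq \D/2$, I would apply the triangle inequality in the ambient Euclidean space $\Reals^d$ with $v$ as the intermediate point:
\[
\D \;=\; \|p^* - q^*\| \;\leq\; \|p^* - v\| + \|v - q^*\|.
\]
Because $p^*$ and $q^*$ both belong to $V(\P)$, each of the two summands is bounded above by $\max_{u \in V(\P)} \|v - u\| = \D'$, so $\D \leq 2\D'$, i.e.\ $\D' \geq \D/2$.

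There is no real obstacle here; the only thing to be slightly careful about is using the correct definition of $\D$ (the diameter of the point set $V(\P)$ as defined in \secref{sec:preliminary}) and observing that $v \in V(\P)$ so that it can legitimately play the role of the triangle-inequality pivot between two vertices of $V(\P)$.
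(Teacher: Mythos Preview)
Your proof is correct and follows essentially the same approach as the paper: both apply the triangle inequality with the arbitrary starting vertex as the pivot between the two diameter-realizing points, then bound each leg by $\D'$ since $\D'$ is the maximum distance from that starting vertex. The only cosmetic difference is notation (you call the starting vertex $v$ and the farthest point $u$, the paper calls them $p$ and $q$).
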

	
\begin{proof}
	The upper bound is obvious. For the lower bound assume that the diameter $\D$ is attained between two vertices $x,y \in V(\P)$ and let $\D'$ be attained between a pair of vertices $p,q \in V(\P)$. By the triangle inequality we have $\|x-y\|\leq \|p-x\| + \|p-y\|$. Observe that $\|p-x\|$ and $\|p-y\|$ are smaller than $\|p-q\|$, otherwise either of them would be $\D'$. Therefore: $$\D=\|x-y\|\leq \|p-x\| + \|p-y\| \leq \|p-q\| + \|p-q\| = 2\|p-q\|= 2\D',$$ which completes the proof.
\end{proof}

\begin{lemma}[Distant query curves] \label{lem:far}
	Let $\G$ be a grid of side length $\L'=2\L$ centered at an arbitrary vertex in $V(\P)$, where $\L= \min\{4\delta,4\delta\D'/\eps\}$ with $0<\eps\leq \delta$. If there exists a vertex $q\in Q$ that lies outside of $\G$, then $\delta_F(P,Q)>\delta$ for all curves $P \in \P$.
\end{lemma}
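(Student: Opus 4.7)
The plan is to show, via a triangle-inequality argument, that if some vertex $q$ of $Q$ escapes the grid $\G$, then $q$ is too far from every point on every input curve for any parameterization to witness a Fr\'echet distance of at most $\delta$. Let $v \in V(\P)$ denote the center of $\G$ used by \algref{alg:prep} (line~\ref{step:grid}).

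First I would bound the extent of the input curves relative to $v$. By \lemref{lem:diam} we have $\D \le 2\D'$, and by the definition of the diameter every vertex $p \in V(\P)$ satisfies $\|p - v\| \le \D \le 2\D'$. Any point on an edge of a curve in $\P$ is a convex combination of two such vertices, so by convexity of the norm the same bound $\|p - v\| \le 2\D'$ holds for every point $p$ lying on any curve of $\P$.

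Next I would lower bound $\|q - v\|$. Since $\G$ is an axis-aligned hypercube of side length $\L' = 2\L$ centered at $v$, the hypothesis $q \notin \G$ means that $q$ differs from $v$ by more than $\L$ in at least one coordinate, so $\|q - v\|_\infty > \L$ and therefore $\|q - v\|_2 > \L$. Combining this with the triangle inequality, for every point $p$ on every curve $P \in \P$,
\[
\|q - p\| \;\ge\; \|q - v\| - \|v - p\| \;>\; \L - 2\D'.
\]
Now I would split into the two cases from lines~\ref{step:smalldiam}--\ref{step:largediam} of \algref{alg:prep}. If $\D' \le \delta$, then $\L = 4\delta$ and $\L - 2\D' \ge 4\delta - 2\delta = 2\delta > \delta$. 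If $\D' > \delta$, then $\L = 4\delta\D'/\eps$, and because $\eps \le \delta$ we have $2\delta/\eps \ge 2$, so
\[
\L - 2\D' \;=\; 2\D'\bigl(2\delta/\eps - 1\bigr) \;\ge\; 2\D' \;>\; 2\delta \;>\; \delta.
\]
In either case there is a constant $c > \delta$ with $\|q - p\| \ge c$ for every $p$ on every $P \in \P$.

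Finally I would transfer this pointwise bound to the Fr\'echet distance. For an arbitrary matching $(\sigma, \theta)$ realizing a candidate value, the continuous non-decreasing surjection $\theta : [0,1] \to [1,k]$ hits the parameter of $q$ at some $t^\star$; then $\|P(\sigma(t^\star)) - q\| \ge c$, so the $\max_t$ in the definition of $\delta_F$ is at least $c$. Taking the infimum over matchings gives $\delta_F(P,Q) \ge c > \delta$, as required. The only delicate point — really just bookkeeping — is verifying the strict inequality in the second case, which I handle above by using $\eps \le \delta$ together with $\D' > \delta$; no deeper obstacle is expected.
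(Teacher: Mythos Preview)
Your argument is correct and in fact cleaner than the paper's. The paper proceeds indirectly: it introduces the smallest enclosing hypercube $\mathsf{C}$ of all points on all input curves, enlarges it by $\delta$ on each side to obtain $\mathsf{C}_\delta$, bounds the side length of $\mathsf{C}_\delta$ by $\D+2\delta$ and then by $4\delta$ or $4\delta\D'/\eps$ according to the two cases, and finally argues that doubling the side and recentring at any vertex of $V(\P)$ yields a grid $\G \supseteq \mathsf{C}_\delta$, so that $q\notin\G$ forces $q\notin\mathsf{C}_\delta$ and hence $\|q-p\|>\delta$.

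You bypass the enclosing-hypercube machinery entirely and work directly with distances from the grid center $v$: the diameter bound gives $\|p-v\|\le 2\D'$ for every point $p$ on every input curve, escaping the cube gives $\|q-v\|>\L$, and a single triangle inequality yields $\|q-p\|>\L-2\D'$, which you then check exceeds $\delta$ in each case. This is more elementary, avoids introducing auxiliary geometric objects, and makes the role of the doubling (half-side $\L$) and of the hypothesis $\eps\le\delta$ completely transparent. The paper's route buys a slightly more geometric picture (Figure~\ref{fig:diam}), but your route is shorter and equally rigorous.
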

\begin{proof}
	We prove this by showing that $\|q-p\|>\delta$ for all points $p\in P$ and $P \in \P$.  Let $U$ be the set of all points (not necessarily vertices) in $\cup_{i=1}^{n}P_i$. Let $x$ and $y$ be two points in $U$ under which $\D$ is attained. Note that by a simple argument one can show that $x$ and $y$ are necessarily two vertices in $U$, i.e., $x,y\in V(\P)$. Now let $\mathsf{C}$ be the smallest enclosing hypercube of $U$. We define $\mathsf{C}_{\delta}$ to be the hypercube whose center is identical to the center of $\mathsf{C}$ and its sides are enlarged additively by an amount of $\delta$ along all axes in $\Reals^d$. Clearly, for any point $q'\in \Reals^d$ outside of $\mathsf{C_{\delta}}$ it holds that $\|q'-p\|>\delta$ for all points $p \in U$. Let $\mathsf{L}$ be the side length of $\mathsf{C_{\delta}}$. We thus have $\mathsf{L}\leq \D+2\delta$ because of the enlarging argument mentioned above (see \figref{fig:diam}). There are two cases to consider as follows:
	
	\begin{enumerate}
		\item $\D'\leq \delta$, hence $\mathsf{L}\leq \D+2\delta\leq 2\D'+2\delta \leq 4\delta$, 
		since $\D \leq 2\D'$ by \lemref{lem:diam}.
		
		\item $\D'>\delta$, hence $\mathsf{L}\leq \D+2\delta\leq 2\D'+2\D'\leq 4\D' \leq 4\D'\delta/\eps$, 
		Since $\D\leq 2\D'$ following \lemref{lem:diam} and $\delta\geq \eps$.
	\end{enumerate}
	Putting the two cases above together we have $\mathsf{L}=\min\{4\delta,4\delta\D'/\eps\}$. Now setting $\L=\mathsf{L}$ we get a grid space $\mathsf{G}$ that itself and its center are identical to $\mathsf{C_{\delta}}$ and the center of $\mathsf{C_{\delta}}$, respectively. Therefore $\mathsf{G}=\mathsf{C_{\delta}}$. 
	If the center of $\mathsf{G}$ changes over the vertices in $V(\P)$, then $\mathsf{G}$ may not cover the entire $U$. Therefore doubling its side length will yield a grid space $\G$ that covers the entire points in $U$. Therefore $\G$ has side length $\L'=2\L$ whose center can be an arbitrary vertex in $U$. Since $\mathsf{C_{\delta}} = \mathsf{G} \subseteq \G$ then for any point $q'\in \Reals^d$ outside of $\G$ it holds that $\|q'-p\|>\delta$  for all points $p \in U$. This implies that for any vertex $q \in Q$ outside of $\G$ we have $\|q-p\|>\delta$ as well which results in $\delta_F(P,Q)>\delta$. 
\end{proof}
\begin{figure} [!t]
	\begin{center}
		\includegraphics[width=0.3\textwidth]{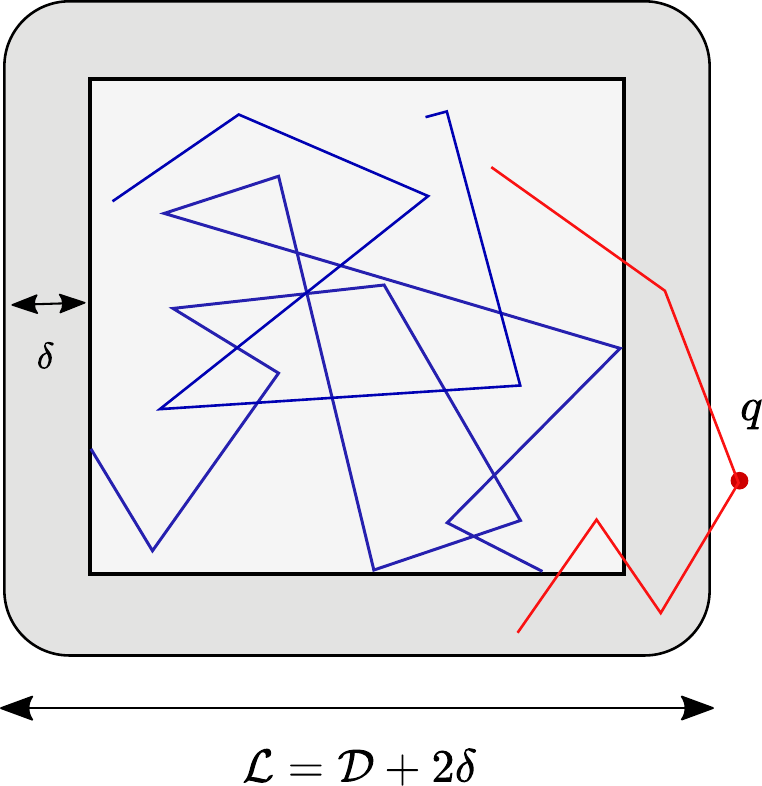}
	\end{center}
	\caption{\label{fig:diam} Any point $q$ outside the enlarged minimum enclosing cube of $\P$ cannot have close distance to any point inside of the cube.}
\end{figure}

\begin{lemma}[Close grid query path]\label{lem:closegridpath}
	Let $Q= \segment{q_1,\cdots, q_k}$ be a query curve in $\Reals^d$ and let $\A=\segment{a_1,\cdots,a_k}$ be the sequence of cells of side length $\ell=\eps\delta/(2\sqrt{d})$, where each $a_i$ contains $q_i$ for all $1\leq i\leq k$. For any grid path $\C=\segment{c_1,\cdots,c_k}$, where every $c_i$ is an arbitrary corner of $a_i$, it holds that $\delta_F(Q,\C)\leq \eps\delta/2$.
\end{lemma}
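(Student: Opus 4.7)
The plan is to exhibit an explicit matching $(\sigma,\theta)$ that realizes a small maximum pointwise distance, and then invoke the definition of the continuous Fr\'echet distance. Since $Q$ and $\C$ both have exactly $k$ vertices, the natural candidate is the \emph{synchronous} reparameterization $\sigma(t)=\theta(t)=1+(k-1)t$, which matches the $i$-th vertex of $Q$ with the $i$-th vertex of $\C$ and interpolates linearly along corresponding edges.

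\textbf{Vertex bound.} The first step is a purely geometric observation: since $c_i$ is a corner of an axis-parallel hypercube $a_i$ of side length $\ell$ and $q_i\in a_i$, the distance between them is at most the diameter of the hypercube, i.e.\ $\|q_i-c_i\|\le \ell\sqrt{d}$. Plugging in $\ell=\eps\delta/(2\sqrt{d})$ immediately gives $\|q_i-c_i\|\le \eps\delta/2$ for every $1\le i\le k$.

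\textbf{Edge bound via convex combinations.} The second step transfers this vertex bound to every pair of matched interior points. For $t\in[0,1]$ on the $i$-th edge, the matched points are $(1-t)q_i+tq_{i+1}$ on $Q$ and $(1-t)c_i+tc_{i+1}$ on $\C$; their difference is the convex combination $(1-t)(q_i-c_i)+t(q_{i+1}-c_{i+1})$. The triangle inequality then yields
\[
\|(1-t)(q_i-c_i)+t(q_{i+1}-c_{i+1})\|\le (1-t)\|q_i-c_i\|+t\|q_{i+1}-c_{i+1}\|\le \eps\delta/2.
\]
Taking the maximum over all $i$ and $t$, and then the infimum over reparameterizations in the definition of $\delta_F$, establishes $\delta_F(Q,\C)\le \eps\delta/2$.

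\textbf{Main obstacle.} There is essentially no obstacle: once $\sigma$ and $\theta$ are chosen to be the synchronous reparameterizations, the claim reduces to the elementary cube-diameter bound plus the convexity of the norm. The only mild care needed is to ensure that $\sigma,\theta$ are valid (continuous, non-decreasing, mapping $0\mapsto 1$ and $1\mapsto k$), which holds by construction, and to note that since both curves have the same number of vertices there is no need for any non-trivial warping.
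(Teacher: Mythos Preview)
Your proof is correct and follows essentially the same approach as the paper: bound $\|q_i-c_i\|$ by the cell diameter $\ell\sqrt{d}=\eps\delta/2$, then use the synchronous (linear) matching between $Q$ and $\C$. The paper simply asserts that ``a simple Fr\'echet matching of width at most $\eps\delta/2$ (linearly) matches every $q_i$ to the corresponding $c_i$,'' whereas you spell out the edge bound via convexity of the norm --- a detail the paper omits but which is exactly the content of that sentence.
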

\begin{proof}
	According to \algref{alg:prep} the side length of each grid cell is equal to:
	$\ell=\eps\delta/2\sqrt{d}$, either $\D'\leq \delta$ or $\D' >\delta$ . Let $\mathsf{D}$ denote the diameter of each grid cell $a_i$. Then $\mathsf{D}=\ell_1\cdot\sqrt{d} =(\eps\delta/(2\sqrt{d}))\cdot\sqrt{d}  =\eps\delta/2$. 
	
	This implies that $\|q_i-c_i\|\leq \mathsf{D} \leq \eps\delta/2$ for every arbitrary corner $c_i \in a_i$. A simple Fr\'echet matching of width at most $\eps\delta/2$ (linearly) matches every $q_i$ to the corresponding $c_i$ and therefore $\delta_F(Q,C)\leq \eps\delta/2$.
\end{proof}

\begin{lemma}[Approximation]\label{lem:approx}
	Let $\eps>0$, and $P \in \P$ be a polygonal curve. If $P$ is returned by the query algorithm then $\delta_F(P,Q)\leq(1+\eps) \delta$. If $P$ is not returned then $\delta_F(P,Q)> \delta$. 
\end{lemma}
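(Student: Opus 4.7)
The plan is to handle the two implications separately, and in both directions the engine is the triangle inequality for the Fr\'echet distance applied to the triple $(P,Q,\C)$, where $\C$ is the grid path produced in Step (3) of the query algorithm on input $Q$. The two supporting lemmas already do almost all of the geometric work: \lemref{lem:closegridpath} bounds $\delta_F(Q,\C)$ by $\eps\delta/2$, and \lemref{lem:far} rules out the only pathological case in which the query algorithm could abort prematurely without missing a genuinely close curve.

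For the first implication, suppose $P$ is returned. Then the algorithm did not terminate at Step~(1), the grid path $\C=\segment{c_1,\dots,c_k}$ was constructed from a sequence of cells $\A=\segment{a_1,\dots,a_k}$ with $q_i\in a_i$, and the index of $P$ lies in the bucket associated with the id of $\C$. By \alglineref{step:frechet} of \algref{alg:prep}, this is only possible if $\delta_F(P,\C)\le(1+\eps/2)\delta$. Combined with $\delta_F(Q,\C)\le \eps\delta/2$ from \lemref{lem:closegridpath}, the triangle inequality gives
\[
\delta_F(P,Q)\;\le\;\delta_F(P,\C)+\delta_F(\C,Q)\;\le\;(1+\eps/2)\delta+\eps\delta/2\;=\;(1+\eps)\delta.
\]

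For the second implication I argue the contrapositive: if $\delta_F(P,Q)\le\delta$ then $P$ is returned. First, \lemref{lem:far} forces every vertex of $Q$ to lie inside $\G$, since otherwise $\delta_F(P',Q)>\delta$ for all $P'\in\P$, contradicting the assumption on this particular $P$; hence Step~(1) does not abort and Steps~(2)--(3) produce a valid $\C$. Applying the triangle inequality in the other direction and using \lemref{lem:closegridpath} once more,
\[
\delta_F(P,\C)\;\le\;\delta_F(P,Q)+\delta_F(Q,\C)\;\le\;\delta+\eps\delta/2\;=\;(1+\eps/2)\delta,
\]
so during preprocessing the test on \alglineref{step:frechet} succeeded for this exact grid path $\C$, and the index of $P$ was stored in the corresponding bucket. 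Since the query algorithm retrieves precisely that bucket in Step~(4), $P$ is reported.

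The only step that requires any care is verifying that the grid path $\C$ examined by the query is literally one of the grid paths enumerated on \alglineref{step:sequence} of the preprocessing, so that the two occurrences of $\C$ coincide and the bucket id matches. This holds by construction: each $c_i$ is a corner of the cell $a_i\subset\G$, hence a grid point of $\G$, and the bucket id is the deterministic concatenation of the ids of $c_1,\dots,c_k$, which is computed identically in both stages. With this housekeeping observation the two triangle-inequality bounds above complete the proof.
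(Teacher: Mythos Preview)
Your proof is correct and follows essentially the same route as the paper: both directions hinge on the triangle inequality for the Fr\'echet distance applied to the triple $(P,Q,\C)$, using \lemref{lem:closegridpath} for $\delta_F(Q,\C)\le\eps\delta/2$ and the storage condition $\delta_F(P,\C)\le(1+\eps/2)\delta$ from \alglineref{step:frechet}. The paper argues the second implication directly via the reverse triangle inequality $\delta_F(P,Q)\ge|\delta_F(P,\C)-\delta_F(Q,\C)|$, whereas you take the contrapositive; these are the same inequality read two ways. Your version is in fact slightly more self-contained, since you explicitly invoke \lemref{lem:far} to rule out early termination at Step~(1), a case the paper's proof of this lemma leaves implicit and only addresses later in the proof of \thmref{thm:DS}.
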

\begin{proof}
	Let $\A=\segment{a_1,\cdots,a_k}$ be the sequence of cells containing the vertices of $Q$ and $\C=\segment{c_1,\cdots,c_k}$ be a grid path where every $c_i$ is an arbitrary corner of $a_i$.
	Following~\lemref{lem:closegridpath} $\delta_F(Q,\C)\leq \eps\delta/2$.
	Since $P$ is returned, it is already stored in the hashtable in line~\ref{step:store} of~\algref{alg:prep} and $\delta_F(P,\C)\leq (1+\eps/2)\delta$. Applying a triangle inequality between $P$, $\C$ and $Q$ yields: $\delta_F(P,Q)\leq \delta_F(Q,\C) + \delta_F(P,\C) \leq \eps\delta/2 + (1+\eps/2)\delta = (1+\eps)\delta.$
	
	Now suppose $P$ is not returned by the query algorithm. Since $P$ is not returned it is not stored in the bucket of $\C$'s id in line~\ref{step:store} of~\algref{alg:prep} hence $\delta_F(P,\C)> (1+\eps/2)\delta$. Applying the other side of the triangle inequality yields: $\delta_F(P,Q)\geq |\delta_F(P,\C)- \delta_F(Q,\C)|> (1+\eps/2)\delta - \eps\delta/2 = \delta.$
	This completes the proof.
\end{proof}
	We summarize this section with the following theorem:
	\begin{theorem}\label{thm:DS}
	Let $\P=\{P_1,\cdots P_n\}$ be a set of $n$ polygonal curves in $\Reals^d$ each of size at most $m$, $\delta>0$ be a real number and $k$ be the size of query given beforehand. For any $0<\eps \leq \delta$, one can construct a deterministic data structure of size  $n \cdot O\Big(\max\big\{\big(\frac{\sqrt{d}}{\eps}\big)^{kd}, \big(\frac{\D\sqrt{d}}{\eps^2}\big)^{kd}\big\}\Big)$ and $nmkd \cdot O\Big(\max\big\{\big(\frac{\sqrt{d}}{\eps}\big)^{kd}, \big(\frac{\D\sqrt{d}}{\eps^2}\big)^{kd}\big\}\Big)$ preprocessing time, such that for any polygonal query curve $Q$ of size $k$ it computes the \textsc{$(1+\eps)\delta$-ANNS} under the continuous Fr\'echet distance in $O(kd)$ query time.  
	\end{theorem}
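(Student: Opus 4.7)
The theorem is the end-of-section summary, so the proof will essentially \emph{assemble} the preprocessing algorithm (\algref{alg:prep}) together with the query algorithm of \subsecref{subsec:query} and then invoke the technical lemmas already proved in this section. First I would point out that \algref{alg:prep} constructs a bounded grid $\G$ of side length $\L'=2\L$ with cells of side length $\ell = \eps\delta/(2\sqrt{d})$, where $\L=\min\{4\delta, 4\delta\D'/\eps\}$ and $\D'$ is the greedy $2$-approximation of the diameter. For every length-$k$ sequence of grid points $\C$ the algorithm stores the indices of all $P\in\P$ with $\delta_F(P,\C)\le(1+\eps/2)\delta$ in the hashtable bucket whose key is the concatenation of point-ids of $\C$.

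Second, I would describe the query: given $Q=\segment{q_1,\ldots,q_k}$ scale so that $\G$ has unit cells, round each $q_i$ to an integer grid point $c_i$, hash the sequence $\C=\segment{c_1,\ldots,c_k}$, and return the indices stored in the resulting bucket; if any $q_i$ lies outside $\G$, abort (return empty). The correctness of aborting in this case is exactly \lemref{lem:far}: since $\D'\le\D$ by \lemref{lem:diam}, the grid is large enough that every curve in $\P$ is guaranteed to have Fréchet distance $>\delta$ to $Q$ whenever some $q_i\notin\G$.

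Third, I would combine the quantitative lemmas. Using $\D'\le\D$ the space bound of \lemref{lem:space} becomes $n\cdot O\bigl(\max\{(\sqrt{d}/\eps)^{kd},(\D\sqrt{d}/\eps^2)^{kd}\}\bigr)$, and the preprocessing-time bound of \lemref{lem:prep} becomes $nmkd\cdot O\bigl(\max\{(\sqrt{d}/\eps)^{kd},(\D\sqrt{d}/\eps^2)^{kd}\}\bigr)$. The query time is $O(kd)$ by \lemref{lem:query} once the rounding trick is applied. Finally, correctness (within the approximation factor $(1+\eps)$) follows from \lemref{lem:approx}: if the bucket of $\C$ contains $P$, a triangle inequality through $\C$ combined with $\delta_F(Q,\C)\le\eps\delta/2$ (\lemref{lem:closegridpath}) yields $\delta_F(P,Q)\le(1+\eps)\delta$; if $P$ is not in the bucket, the reverse triangle inequality yields $\delta_F(P,Q)>\delta$.

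The only step that is not a direct citation is verifying that the rounding-based query returns exactly the same $\C$ (up to the choice of corner) as the locate-then-snap description used in \lemref{lem:closegridpath}; this is essentially the observation that rounding coordinates under the unit-cell scaling identifies each $q_i$ with a corner of the cell $a_i\ni q_i$, so the guarantee $\|q_i-c_i\|\le \eps\delta/2$ still holds and \lemref{lem:closegridpath} applies unchanged. I expect the main subtlety — though not really an obstacle — to be careful bookkeeping between the two equivalent ways of describing $\C$ (corner-of-cell vs.\ rounded-coordinate), together with verifying that the diameter approximation $\D'$ may be freely replaced by $\D$ in the final statement since both upper bounds are monotone in the argument.
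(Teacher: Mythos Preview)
Your proposal is correct and follows essentially the same route as the paper's own proof: assemble \algref{alg:prep} with the query procedure, cite \lemref{lem:prep}, \lemref{lem:space}, \lemref{lem:query} for the complexity bounds (replacing $\D'$ by $\D$ via \lemref{lem:diam}), invoke \lemref{lem:far} for the abort case, and invoke \lemref{lem:closegridpath} and \lemref{lem:approx} for the approximation guarantee. The extra remark you make about the rounding-based query producing the same grid path $\C$ as the locate-then-snap description is a welcome clarification that the paper leaves implicit.
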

	\begin{proof}
	The construction time, space and query time follow from Lemmas \ref{lem:prep}, \ref{lem:space} and \ref{lem:query}, respectively,  where 
$\D' \leq \D$ by \lemref{lem:diam}. In Step (1) of the query algorithm if any vertex of $Q$ is outside of $\G$ of side length $2\cdot\max\{4\delta,4\delta\D'/\eps\}$ with $\D/2\leq \D'\leq \D$ (\lemref{lem:diam}), then following \lemref{lem:far} we have $\delta_F(P,Q)>\delta$. Hence the algorithm terminates as desired because there is no $P$ close to $Q$. In this lemma it is assumed that $\delta\geq \eps$.
Now in Step (2) and Step (3) the algorithm finds a grid path $\C$ of distance at most $\eps\delta/2$ to $Q$ (\lemref{lem:closegridpath}). \lemref{lem:approx} implies that if any curve like $P\in \P$ is returned by the query algorithm from the bucket of the id associated with $\C$ then $\delta_F(P,Q)\leq (1+\eps)\delta$ and $\delta_F(P,Q)>\delta$, otherwise. 
Note that all Lemmas~\ref{lem:far},~\ref{lem:closegridpath} and \ref{lem:approx} work under the discrete Fr\'echet distance as well, therefore they result in solving the $(1+\eps)\delta$-\textsc{ANNS} under the discrete Fr\'echet distance.
	\end{proof}

\section{Symmetric \textsc{ANNS} under the Discrete Fr\'echet Distance} \label{sec:symDS} 
In this section we set out to extending the data structure for the case where $k$ is not part of the preprocessing but part of the query. Since $k$ is not given in the preprocessing stage, the method presented in Section~\ref{sec:asymDS} may not be efficient for computing all grid path of size at most $k$. which leads to exponential growth in both space and construction time complexities. As mentioned in the previous section,  the way of building and using the grid is crucial for handling queries. The idea is instead of constructing a bounded grid of a certain side length, discretize the possible space for placement of query vertices and then preprocess the grid paths of length $k$ induced by the obtained discretization. The rest of the preprocessing and query algorithm together is similar to what we presented in \secref{sec:asymDS}. 
What Filtser et al.~\cite{ffk-anncsed-19} consider is that both the query and input curves' complexities are equal to $m$. They distinguish between this setting and the one where the size of the query may not be equal to the input curve's but is instead given in the preprocessing. However, it seems that this may not really affect the nature of the problem since in either case they assume that the size of the query curve is bounded by some known amount (either $k$ or $m$). What we assume is more general, where we have no prior knowledge about the size of the query curve beforehand.
	We describe our preprocessing algorithm below. 
	\\
	
	\noindent{\bf Step (1):}  Compute grid point sets $g_{i,j}=\{p~|~ p \in B\big(p_j,3(1+\eps/2)\delta\big)\cap \partition(\Reals^d,\eps\delta/2\sqrt{d})\}$, for all $p_j \in P_i$ with $1\leq j\leq m$ and $1\leq i \leq n$. 
	\\
	
	\noindent{\bf Step (2):} Compute grid point sets $g'_{i,j}=\{p~|~ p \in B\big(p_j,(1+\eps/2)\delta\big)\cap \partition(\Reals^d,\eps\delta/2\sqrt{d})\}$, for all $p_j \in P_i$ with $1\leq j\leq m$ and $1\leq i \leq n$. The obtained set of grid points is  \emph{marked} grid points. These points are candidates for placing the vertices of the rounded query grid path.
	\\
	
	\noindent{\bf Step (3):} Construct a graph $G_i$ for each $P_i \in \P$ whose vertices are $g_{i,j}$ for all $1 \leq j <|V(P_i)|$ and edges $\segment{c_jc_{j+1}}$ are straight-line segments, where $c_j \in g_{i,j}$ and $c_{j+1}\in g_{i,j+1}$.  
	\\
	
	\noindent{\bf Step (4):} For each \emph{complete} grid path $C_i\in G_i$, i.e., starting from a grid point in $g_{i,1}$ and ending to $g_{i,l}$, where $l=|V(P_i)|\leq m$, associate a bucket with a unique id induced by the concatenation of its grid points' ids, and store the index of $P_i$ (say $i$) in the bucket. 
	\\
	
	\noindent We are now ready to analyze our data structure: 
	\\
	
	\noindent{\bf Construction time:} Let $\N'$ be the number of cell obtained by $B\big(p_j,3(1+\eps/2)\delta\big)\cap \partition(\Reals^d,\eps\delta/2\sqrt{d})$. First note that:
	$$\N':= \frac{\V^d(B)}{\V^d(a)}.$$
	In above formula $\V^d(B)$ is the volume of the ball $B\big(p_j,(1+\eps/2)\delta\big)$ in Euclidean space which is divided by the volume of the cell $\V^d(a)$ of side length $\ell =\eps\delta/2\sqrt{d}$. Thus:
	
	
	$$\N':= \frac{{\frac{2\pi^{d/2}}{d\Gamma(d/2)}}}{\ell^d}\cdot R^d = \frac{{\frac{2\pi^{d/2}}{d\Gamma(d/2)}}}{(\eps\delta/2\sqrt{d})^d}\cdot \big(3(1+\eps/2)\delta\big)^d,$$
	
	which is:
	
	$$\N':= O\Big(\frac{1}{\eps}\Big)^d,$$
	for sufficiently small $\eps>0$. The upper bound of $\N'$ is obtained by dividing the volume of a Euclidean ball by the volume of a cell using a volumetric argument that appeared in \cite{him-anntrcd-12}.
	\\
	
	Now realize that the number of grid points in $g_{i,j}$ is:  $$|g_{i,j}|= 2^d\N'= 2^d\cdot O\Big(\frac{1}{\eps}\Big)^d = O\Big(\frac{1}{\eps}\Big)^d.$$
	The number of vertices in $G_i$ is: $$|V(G_i)|= \sum_{j=1}^{m} |g_{i,j}| \leq m |g_{i,j}| = m\cdot O\Big(\frac{1}{\eps}\Big)^d,$$ for all $1\leq i \leq n$,
	and the number of edges of $g_{i,j}$ is: 
	$$|E(G_i)|= (m-1)\cdot |g_{i,j}|^2 = m\cdot O\Big(\frac{1}{\eps}\Big)^{2d},$$ since we connect an edge between any pair of grid points lying within balls around every two consecutive vertices of $P_i$. For each complete path in $G_i$ we associate a bucket in the hashtable to store $i$ in it in constant time. The number of complete paths per $G_i$ is: $$O(|g_{i,j}|^m) = O\Big(\frac{1}{\eps}\Big)^{md}.$$ 
	
	We associate a bucket to each complete path in $G_i$. Therefore the total construction time for constructing $G_i$ and storing all paths of $G_i$ in the hashtable is: $$O(|V(G_i)|+ |E(G_i)|) + O(|g_{i,j}|^m) = m\cdot O\Big(\frac{1}{\eps}\Big)^{2d} + O\Big(\frac{1}{\eps}\Big)^{md} = m\cdot O\Big(\frac{1}{\eps}\Big)^{md}.$$ We have the following lemma:	

\begin{lemma}\label{lem:asymdfprep}
	The construction time of the data structure is $nm\cdot O\big(\frac{1}{\eps}\big)^{md}.$ 
\end{lemma}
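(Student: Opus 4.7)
The plan is to analyze each of the four preprocessing steps separately and sum the resulting costs, relying on a standard volumetric estimate for the number of grid cells inside a Euclidean ball.

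First I would bound $\N'$, the number of cells of side length $\ell=\eps\delta/(2\sqrt{d})$ that intersect a ball of radius $R=3(1+\eps/2)\delta$. Using the volumetric argument from \cite{him-anntrcd-12}, one divides the volume of the $d$-dimensional ball by the volume of a single cell. Plugging in $R$ and $\ell$ the $\delta$'s cancel and the ratio simplifies to $O(1/\eps)^d$ for sufficiently small $\eps$. Because each cell contributes at most $2^d$ corner grid points, this immediately gives $|g_{i,j}|=O(1/\eps)^d$ (the same bound also holds for $g'_{i,j}$, which uses a smaller radius). This is the most delicate arithmetic in the proof, but it is routine once one writes the volume of a $d$-ball using $\Gamma(d/2)$.

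Next I would bound the size of each graph $G_i$ constructed in Step~(3). Since $G_i$ has one ``layer'' of vertices per vertex of $P_i$, and each layer contains at most $|g_{i,j}|$ grid points, we get $|V(G_i)|\le m\cdot O(1/\eps)^d$. Edges connect grid points in consecutive layers in a complete bipartite fashion, so $|E(G_i)|\le (m-1)\cdot |g_{i,j}|^2 = m\cdot O(1/\eps)^{2d}$. Hence $G_i$ itself can be built in $m\cdot O(1/\eps)^{2d}$ time.

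For Step~(4) I would count the complete grid paths traversing $G_i$ from the first to the last layer. Since each layer can be entered from any of the $O(1/\eps)^d$ candidates, the total number of complete paths per $G_i$ is at most $|g_{i,j}|^m = O(1/\eps)^{md}$; each such path is enumerated, its hash code assembled in $O(md)$ time (dominated by the overall exponential), and the index $i$ inserted in the corresponding bucket in $O(1)$ time. This dominates the $m\cdot O(1/\eps)^{2d}$ cost of constructing $G_i$, giving a per-curve cost of $m\cdot O(1/\eps)^{md}$.

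Finally I would sum over the $n$ input curves, yielding the claimed total preprocessing time of $nm\cdot O(1/\eps)^{md}$. The only genuinely technical step is the volumetric bound on $\N'$; the remainder is direct counting of vertices, edges, and paths in the layered graphs $G_i$.
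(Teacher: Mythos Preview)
Your proposal is correct and follows essentially the same approach as the paper: both bound $\N'$ via the volumetric argument of \cite{him-anntrcd-12}, derive $|g_{i,j}|=O(1/\eps)^d$, count $|V(G_i)|$ and $|E(G_i)|$ the same way, observe that the number of complete layered paths is $O(1/\eps)^{md}$ and dominates the graph-construction cost, and then multiply by $n$. The only minor addition you make---accounting for the $O(md)$ time to assemble each hash code---is absorbed into the exponential term just as in the paper.
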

	
\noindent{\bf Space:} The number of complete paths of length $m$ in $G_i$ of the grid point set size $|g_{i,j}|$ is: $$ O\big(|g_{i,j}|^m \big) = O\Big(\frac{1}{\eps}\Big)^{md},$$ for some $1\leq j\leq m$ and for all $1\leq i\leq n$. Note that number of buckets per graph $G_i$ is the same as the number of paths in $G_i$. And for each grid path in $\C_i \in G_i$ we only store $i$ in the bucket associated with $\C_i$, for $1\leq i \leq n$. Therefore we have $n\cdot O\big(\frac{1}{\eps}\big)^{md}$ space overall.
	
	\begin{lemma}\label{lem:asymdfspace}
		The space required for the data structure is $n\cdot O\big(\frac{1}{\eps}\big)^{md}$.
	\end{lemma}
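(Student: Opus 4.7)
The plan is to bound the total space by charging it to the number of buckets created across all $n$ graphs $G_i$, since each bucket holds only the single index $i$ (constant size). First I would recall that in the construction we have $|g_{i,j}| = O(1/\eps)^d$ for every curve $P_i$ and every vertex index $j$, a bound already established via the volumetric estimate $\N' = O(1/\eps)^d$ in the preceding paragraphs. A complete grid path in $G_i$ selects one grid point from each of the $m$ layers $g_{i,1},\dots,g_{i,m}$, so the number of such paths is at most
\[
\prod_{j=1}^{m} |g_{i,j}| \;\leq\; \bigl(O(1/\eps)^d\bigr)^m \;=\; O\!\left(\tfrac{1}{\eps}\right)^{md}.
\]

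Next I would observe that by Step (4) of the preprocessing, each complete path $\C_i \in G_i$ receives a distinct bucket (indexed by the concatenated ids of its grid points) into which only the single integer $i$ is stored. Thus the total storage attributable to $P_i$ is $O(1)$ per path, giving $O(1/\eps)^{md}$ per curve. Summing over the $n$ curves yields the claimed bound $n \cdot O(1/\eps)^{md}$.

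I do not anticipate a real obstacle here: the argument is a direct counting calculation that piggybacks on the volumetric bound $|g_{i,j}| = O(1/\eps)^d$ derived in the construction-time analysis. The only minor subtlety to double-check is that buckets are not shared across different graphs $G_i$ in a way that would reduce the total; since each bucket still contains at most one index per curve using it, even shared buckets contribute no more than $n \cdot O(1/\eps)^{md}$ total entries, so the asymptotic bound is unaffected.
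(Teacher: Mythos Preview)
Your proposal is correct and follows essentially the same approach as the paper: count the complete paths in each $G_i$ as $|g_{i,j}|^m = O(1/\eps)^{md}$, note that each bucket stores only the single index $i$, and sum over the $n$ curves. Your treatment is in fact a bit more careful than the paper's (writing the product explicitly and addressing the possibility of shared buckets), but the argument is the same.
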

	\subsection{Query algorithm} 
Now we present the query algorithm which is similar to what we presented for the previous data structure. We first check if all vertices of $Q$ lie entirely inside of $G= \cup_{i=1}^{n} G_i$. 
Next we round $Q$ onto a grid path $\C_Q$ whose vertices are only selected from the marked grid points in the preprocessing stage. We then simplify the rounded query grid path using $\mu$-simplification by~\cite{dhw-cfdrinlt-12}. This results in a simplified query grid path $\C'_Q$ whose edges are longer than $\mu$. We choose $\mu=2(1+\eps/2)\delta$ as the threshold parameter for this simplification. In the end we use the id of $\C'_Q$ to retrieve all stored curves from the hashtable. The $\mu$-simplification and rounding take linear time each so the query algorithm takes $O(kd)$ overall.
We have the following lemmas:

\begin{lemma}[Algorithm 2.1 in \cite{dhw-cfdrinlt-12}]\label{lem:exsimplification}
	There is a linear time $\mu$-simplification of $P$ that gives $P'$ fulfilling $\delta_{dF}(P,P')\leq \mu$, where every edge of $P'$ has length strictly greater than $\mu$ except possibly the last one. 
\end{lemma}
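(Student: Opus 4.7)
The plan is to exhibit a one-pass greedy algorithm and then verify the two required properties. Write $P=\langle p_1,\ldots,p_m\rangle$. Initialize $P'$ with $p_1$ and keep a pointer $c$ to the last vertex copied into $P'$ (initially $c=p_1$). Scan the remaining vertices $p_2,p_3,\ldots$ in order; for each $p_j$, test whether $\|c-p_j\|>\mu$, and if so append $p_j$ to $P'$ and update $c\leftarrow p_j$. After the scan finishes, if the final vertex $p_m$ is not already the last vertex of $P'$, append $p_m$. Each vertex is touched a constant number of times, so the total running time is $O(m)$ as required.

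For the edge-length property, observe that whenever a new vertex $p_j$ is appended because $\|c-p_j\|>\mu$, the corresponding edge of $P'$ has length strictly greater than $\mu$ by construction. The only edge that can violate this bound is the very last one: if the scan terminated without the greedy test triggering for $p_m$, we forcibly append $p_m$ to ensure that $P'$ and $P$ share endpoints, and the length of that final edge can be at most $\mu$. This matches the ``except possibly the last one'' clause in the statement.

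To bound $\delta_{dF}(P,P')$ I would exhibit an explicit discrete Fr\'echet coupling. Let $P'=\langle q_1,\ldots,q_l\rangle$ with $q_s=p_{i_s}$, where $i_1=1$ and $i_l=m$. Define the coupling as follows: match $p_{i_s}$ to $q_s$ (distance $0$), and for each intermediate index $j$ with $i_s<j<i_{s+1}$ match $p_j$ to $q_s$. This is a valid discrete Fr\'echet traversal (both pointers are nondecreasing and both endpoints are aligned). For any such intermediate $p_j$, the greedy rule guarantees that $\|p_{i_s}-p_j\|\le\mu$, since otherwise $p_j$ would itself have been added to $P'$ before index $i_{s+1}$. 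Hence every matched pair has distance at most $\mu$, and $\delta_{dF}(P,P')\le\mu$.

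There is no serious obstacle here; the only mildly delicate point is handling the final vertex correctly so that endpoints are preserved (a requirement of the discrete Fr\'echet matching) while still certifying that every non-terminal edge of $P'$ exceeds $\mu$. The greedy rule together with the forced final append resolves this, and the same argument shows the matching width is bounded by $\mu$ uniformly.
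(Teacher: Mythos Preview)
Your proof is correct and follows essentially the same approach as the paper: both describe the identical greedy ball-based simplification (anchor at the last kept vertex, advance until a vertex leaves $B(\text{anchor},\mu)$, repeat) and certify $\delta_{dF}(P,P')\le\mu$ by matching all skipped vertices to their anchor. Your handling of the terminal vertex is slightly more explicit than the paper's, but the argument is otherwise the same.
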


	\begin{proof}
	We describe the algorithm for the sake of a comprehensive presentation. The simplification is as follows: set the first vertex of $P$ as the current vertex $u$. Start from $u$ and find the first vertex $v$ along $P$ that lies outside of $B(u,\mu)$. Draw a line segment between $u$ and $v$ as a simplified link and set $v$ as the current vertex. Repeat this process until the last vertex is reached. 
Note that for every edge $\segment{uv}\in P'$ all points in $P$ before reaching $v \in P'$ are inside $B(u,\mu)$ thus they all can be matched to $u$ and therefore $\delta_{dF}(P,P')\leq \mu$. Also since $v$ is outside the ball $B(u,\mu)$ the length of $\segment{uv}$ is greater than $\mu$. Clearly every vertex is processed only once so it all takes linear time.
	\end{proof}
\begin{lemma}\label{lem:exquery}
	For a query curve $Q$ of size $k$, the query algorithm takes $O(kd)$.
\end{lemma}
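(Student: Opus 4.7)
The plan is to analyze each of the four steps of the query algorithm separately and bound their running times by $O(kd)$; then the total is $O(kd)$ by summation.

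First I would handle Step (1). We need to check whether every vertex of $Q$ lies inside $G=\cup_{i=1}^n G_i$. Since $G$ is determined by the bounding box of all balls $B(p_j,3(1+\eps/2)\delta)$ around vertices of input curves (computed once in preprocessing), an inclusion test for a single vertex reduces to comparing its $d$ coordinates against the stored bounds and requires $O(d)$ time. Running this test for all $k$ vertices of $Q$ costs $O(kd)$.

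Next I would handle Step (2): rounding each vertex $q_i$ of $Q$ to a grid point of $\partition(\Reals^d,\eps\delta/(2\sqrt{d}))$. This is done exactly as in the asymmetric construction, by applying the fixed scaling factor and rounding each of the $d$ coordinates, which takes $O(d)$ per vertex and $O(kd)$ across $Q$. This produces the rounded grid path $\C_Q$ with $k$ vertices.

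For Step (3), I would invoke Lemma~\ref{lem:exsimplification} with $\mu=2(1+\eps/2)\delta$ applied to $\C_Q$: it runs in linear time in the number of input vertices, with each ball membership test taking $O(d)$, giving $O(kd)$ overall and producing a simplified curve $\C'_Q$ of at most $k$ vertices.

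Finally, Step (4) computes the hash code of $\C'_Q$ by concatenating the $O(d)$-sized identifiers of its at most $k$ grid vertices and performs a single hashtable lookup in $O(kd)$ time under the standard word-RAM assumption. Summing the four bounds yields $O(kd)$ total. There is no real obstacle here; the only point that deserves a brief mention is that the lookup in Step (4) is well-defined because the hashtable was indexed in the preprocessing stage using the same encoding of grid-point identifiers that the query algorithm now reproduces, so constructing the key for $\C'_Q$ is itself $O(kd)$ and constant-time hashtable access completes the step.
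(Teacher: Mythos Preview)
Your proof is correct and follows essentially the same approach as the paper: the paper's own proof is the one-line remark that the $\mu$-simplification (Lemma~\ref{lem:exsimplification}) and the rounding each take linear time, hence $O(kd)$ overall. Your version simply spells out the per-step $O(kd)$ bounds more explicitly, including the inclusion test and the hashtable lookup that the paper leaves implicit.
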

\begin{proof}
	Following \lemref{lem:exsimplification} and the fact that rounding takes linear time, the query algorithm takes time $O(kd)$.
\end{proof}
\begin{lemma} [Correctness] \label{lem:excorrectness}
	Let $\eps>0$  and $P_i \in \P$ be a polygonal curve for some $1\leq i\leq n$. If $P_i$ is returned by the query algorithm then $\delta_{dF}(P_i,Q)\leq (5+\eps) \delta$. If $P_i$ is not   returned then $\delta_{dF}(P_i,Q)> \delta$. 
\end{lemma}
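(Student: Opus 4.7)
The plan is to derive both directions from a single triangle-inequality argument, relaying $Q$ through the two intermediate curves built by the query algorithm: the rounded query $\C_Q$ and its $\mu$-simplification $\C'_Q$ with $\mu=2(1+\eps/2)\delta$. The three legs of the chain are (i) rounding, which moves each vertex of $Q$ to a marked grid point in a cell of diameter $\eps\delta/2$; (ii) the $\mu$-simplification of \lemref{lem:exsimplification}, which costs at most $\mu$; and (iii) the containment of the surviving grid points in the balls $g_{i,j}=B(p_j,3(1+\eps/2)\delta)\cap\partition(\Reals^d,\eps\delta/(2\sqrt d))$, which costs at most $3(1+\eps/2)\delta$. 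Summing these contributions gives $\delta_{dF}(P_i,Q)\le 3(1+\eps/2)\delta+2(1+\eps/2)\delta+\eps\delta/2 = (5+3\eps)\delta$, which after a harmless reparameterization of $\eps$ is the advertised $(5+\eps)\delta$.

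For the first direction, if $P_i$ is returned then, by Step (4) of the preprocessing, $\C'_Q$ must coincide with some complete path of $G_i$: it has $l=|V(P_i)|$ vertices and its $j$-th vertex lies in $g_{i,j}$. The vertex-wise pairing of $p_j$ with the $j$-th vertex of $\C'_Q$ is a valid discrete Fr\'echet traversal of width at most $3(1+\eps/2)\delta$, and combined with $\delta_{dF}(\C'_Q,\C_Q)\le\mu$ and $\delta_{dF}(\C_Q,Q)\le\eps\delta/2$ the triangle inequality for $\delta_{dF}$ yields $\delta_{dF}(P_i,Q)\le(5+\eps)\delta$.

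For the second direction I argue the contrapositive: suppose $\delta_{dF}(P_i,Q)\le\delta$; I show $P_i$ is retrieved. A witnessing traversal matches each $p_j$ to some $q_{t_j}$ with $\|p_j-q_{t_j}\|\le\delta$, and rounding sends $q_{t_j}$ to a marked grid point $q'_{t_j}$ within $\eps\delta/2$, so $\|p_j-q'_{t_j}\|\le(1+\eps/2)\delta$ and $q'_{t_j}\in g'_{i,j}$. Hence $\C_Q$ visits the marked balls $g'_{i,1},\ldots,g'_{i,l}$ in order; any two consecutive vertices of $\C_Q$ lying in a common ball are separated by at most its diameter $\mu=2(1+\eps/2)\delta$, so the greedy walk of \lemref{lem:exsimplification} collapses them. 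The delicate part, and main obstacle, is ruling out the possibility that the greedy $\mu$-ball walk also merges a representative of $g'_{i,j}$ with one of $g'_{i,j+1}$ when the centers $p_j,p_{j+1}$ happen to be close, which would drop $\C'_Q$ below $l$ vertices; the argument must exploit the freedom in the Fr\'echet matching (a single $p_j$ may absorb several consecutive $q_t$) together with the inflation from the marked balls $g'_{i,j}$ of radius $(1+\eps/2)\delta$ to the enclosing layers $g_{i,j}$ of radius $3(1+\eps/2)\delta$ used by $G_i$, so that even the less favorable alignments still land on one of the complete paths stored in $G_i$'s buckets. This extra slack is precisely what pushes the approximation factor from $1+\eps$ up to $5+\eps$.
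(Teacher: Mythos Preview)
Your outline tracks the paper's proof closely. The first direction is complete and identical to the paper's: chain $P_i\to\C'_Q\to\C_Q\to Q$ via the triangle inequality for $\delta_{dF}$, with contributions $3(1+\eps/2)\delta$, $2(1+\eps/2)\delta$, and $\eps\delta/2$, summing to $(5+3\eps)\delta$.

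For the second direction you correctly take the contrapositive, observe that under $\delta_{dF}(P_i,Q)\le\delta$ every vertex of $\C_Q$ lands in some marked ball $g'_{i,j}$, and correctly isolate the obstacle: the greedy $\mu$-simplification might merge vertices across consecutive balls, so $\C'_Q$ could fail to have a representative for every $p_j$. You also correctly name the fix---the inflation from radius $(1+\eps/2)\delta$ to $3(1+\eps/2)\delta$---but you stop at ``the argument must exploit\ldots'' without actually supplying it. The paper discharges this in one line: if $c_j\in\C_Q\cap B\bigl(p_j,(1+\eps/2)\delta\bigr)$ is deleted by the simplification, then by construction of the greedy walk $c_j$ lies within $\mu=2(1+\eps/2)\delta$ of its surviving anchor $c_{j'}\in\C'_Q$, hence
\[
\|p_j-c_{j'}\|\le\|p_j-c_j\|+\|c_j-c_{j'}\|\le(1+\eps/2)\delta+2(1+\eps/2)\delta=3(1+\eps/2)\delta,
\]
so $c_{j'}\in g_{i,j}$. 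Thus every $p_j$ is covered by some surviving vertex of $\C'_Q$ in the enlarged ball, and $\C'_Q$ is a complete path of $G_i$, whence $P_i$ is retrieved. This single triangle inequality is the entire content of your ``delicate part,'' and it is precisely why the factor $3$ (and hence the final $5+\eps$) appears; no appeal to ``freedom in the Fr\'echet matching'' is needed. The paper also separates out a preliminary Case~1---some vertex of $\C_Q$ is not a marked grid point at all---which your contrapositive implicitly subsumes, since $\delta_{dF}(P_i,Q)\le\delta$ forces every $q_t$ within $\delta$ of some $p_j$ and hence its rounding into $g'_{i,j}$.
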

	
\begin{proof}
Suppose $P_i$ is stored in the hashtable already and returned by the query algorithm.  
Note that $\delta_{dF}(P_i,\C_i)\leq 3(1+\eps/2)\delta$, for all $\C_i\in G_i$. Since $P_i$ is returned  there exists a $\C^*_i\in G_i$ such that $\C^*_i=\C'_Q$ because there is a bucket of id associated with $\C^*_i$ that $\C'_Q$ caught it through the query algorithm. Thus $\delta_{dF}(P_i,\C'_Q)\leq 3(1+\eps/2)\delta$. Note that  $\delta_{dF}(Q,\C_Q)\leq \eps\delta/2$ because the distance between every vertex and the rounded one is at most the diameter of the grid cells (\lemref{lem:closegridpath}). On the other hand $\delta_{dF}(\C'_Q,\C_Q)\leq 2(1+\eps/2)\delta$ because of the $2(1+\eps/2)\delta$-simplification between $\C_Q$ and $\C'_Q$ (\lemref{lem:exsimplification}). Now applying the triangle inequality twice, once between $P_i$, $Q$ and $\C'_Q$ and once again between $\C'_Q$, $\C_Q$ and $Q$, yields:

\begin{align*}
\delta_{dF}(P_i,Q) & \leq\delta_{dF}(P_i,\C'_Q) +\delta_{dF}(\C'_Q,Q) \\
& \leq \delta_{dF}(P_i,\C'_Q) +\delta_{dF}(\C'_Q,\C_Q) + \delta_{dF}(\C_Q, Q) \\
& \leq 3(1+\eps/2)\delta + 2(1+\eps/2)\delta + \eps\delta/2 = (5+3\eps)\delta,
\end{align*}
	\begin{figure} [!t]
	\begin{center}
		\includegraphics[width=0.4\textwidth]{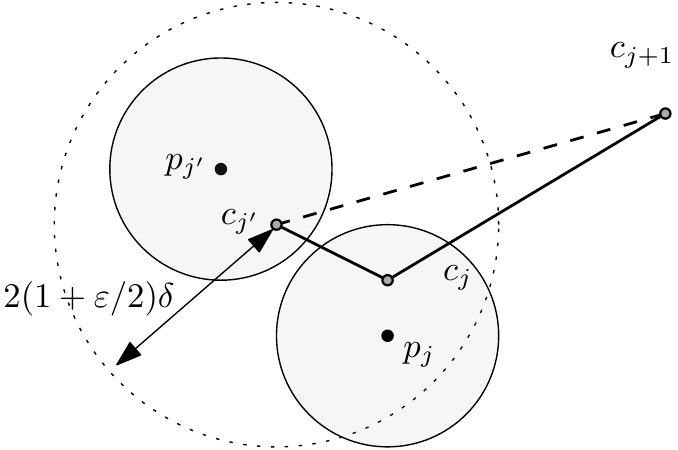}
	\end{center}
	\caption{\label{fig:proofCase2} The marked grid point $c_j \in B(p_j, (1+\eps/2)\delta)$ is removed on the solid line path by the simplified dashed line since $c_j \in B(c_{j'}, 2(1+\eps/2)\delta)$. }
\end{figure}
as desired when $\eps$ is sufficiently small. 
Now suppose that $P_i$ is not returned. We show that $\delta_F(P_i,Q)>\delta$. There are two cases occurring if $P_i$ is not returned: 

\begin{enumerate}
	\item It might be that all vertices of $\C_Q$ are not entirely rounded onto the marked grid points. Clearly this results in having $\delta_{dF}(P_i,\C_Q)>(1+\eps/2)\delta$ because some vertex of $\C_Q$ lies outside of $B\big(p_j, (1+\eps/2)\delta\big)$ for some $1\leq j\leq |V(P_i)|$. Now this time applying the other side of triangle inequality yields:
	$$\delta_{dF}(P_i,Q)> |\delta_{dF}(P_i,\C_Q)- \delta_{dF}(Q,\C_Q)| > (1+\eps/2)\delta - \eps\delta/2 = \delta .$$
	
	\item Case (1) does not occur but $\C'_Q \notin G_i$. For the sake of a contradiction assume that $\delta_{dF}(P,Q)\leq \delta$. Since Case (1) is not occurring it implies that all vertices of $\C_Q$ are rounded onto the marked grid points hence they are a subset of $V(G_i)$. Following \lemref{lem:exsimplification}, any $2(1+\eps/2)\delta$-simplification results in having simplified curve of edges longer than $2(1+\eps/2)\delta$ thus $\C'_Q$ has at most one vertex within every $B(p_j,(1+\eps/2)\delta)$ for all $1\leq j\leq |V(P_i)|$. Now if it has exactly one vertex inside of each such ball then clearly $\C'_Q \in G_i$. If there is a ball $B(p_j,(1+\eps/2)\delta)$  for some $1\leq j\leq |V(P_i)|$ such that no vertex of $\C'_Q$ falls into it, 
	there is at least a vertex $c_j\in \C_Q$ where $c_j \in B(p_j,(1+\eps/2)\delta)$ but it is removed due to the $2(1+\eps/2)\delta$-simplification. Therefore $c_j \in B(p_{j'},2(1+\eps/2)\delta)$ for some $j'<j$. This implies that there is some $c_{j'} \in \C_Q$ where $c_{j'} \in B(p_{j'},(1+\eps/2)\delta)$. Therefore: $$\|p_j-c_{j'} \| \leq \| p_j-c_j\|+ \|c_j-c_{j'}\|\leq (1+\eps/2)\delta + 2(1+\eps/2)\delta=3(1+\eps/2)\delta,$$
	
	as $\|c_j-c_{j'}\|\leq 2(1+\eps/2)\delta$ because of the fact that $c_j$ is removed by the simplification since  $c_j \in B(c_{j'}, 2(1+\eps/2)\delta)$, i.e., $c_j$ lies within $2(1+\eps/2)\delta$ distance from $c_{j'}$, see Figure~\ref{fig:proofCase2}.

	Since $\|p_j-c_{j'} \|\leq 3(1+\eps/2)\delta$, it follows that $c_{j'} \in g_{i,j}$ and correspondingly $\C'_Q\in G_i$ since $\C'_Q$ has now exactly one vertex in $B(p, 3(1+\eps/2)\delta)$ for all $p \in V(P_i)$. This implies that $P_i$ is stored in the bucket of $\C'_Q$'s id and $P_i$ is returned already. We have a contradiction. 
\end{enumerate} 
Therefore $\delta_{dF}(P,Q)>\delta$ in both cases and this completes the proof.
\end{proof}

	\noindent We now summarize the section with the following theorem:
	
	\begin{theorem} \label{thm:exDS}
	Let $\P=\{P_1,\cdots P_n\}$ be a set of $n$ polygonal curves in $\Reals^d$ each of size at most $m$, $\delta > 0$ be a real number. For any $\eps>0$, one can construct a deterministic data structure of size $n\cdot O\Big(\frac{1}{\eps}\Big)^{md}$ and construction time $nm\cdot O\Big(\frac{1}{\eps}\Big)^{md}$ such that for any polygonal query curve $Q$ of size $k$, it computes the \textsc{$(5+\eps)\delta$-ANN} under the discrete Fr\'echet distance in $O(kd)$ query time.
	\end{theorem}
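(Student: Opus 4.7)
The plan is to assemble the four lemmas established earlier in this section, with only a minor rescaling of $\eps$. The space bound $n\cdot O(1/\eps)^{md}$ is exactly \lemref{lem:asymdfspace}, and the preprocessing time $nm\cdot O(1/\eps)^{md}$ is exactly \lemref{lem:asymdfprep}. The query time $O(kd)$ is given by \lemref{lem:exquery}, whose proof simply notes that both the rounding of $Q$ onto the grid and the $\mu$-simplification of \lemref{lem:exsimplification} run in linear time.

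For correctness, \lemref{lem:excorrectness} shows that any reported $P_i$ satisfies $\delta_{dF}(P_i,Q)\le (5+3\eps)\delta$, while any non-reported $P_i$ satisfies $\delta_{dF}(P_i,Q)>\delta$. The three contributions making up the $5+3\eps$ are the outer ball radius $3(1+\eps/2)\delta$ used to define $g_{i,j}$, the simplification threshold $2(1+\eps/2)\delta$ applied in the query, and the grid-cell diameter $\eps\delta/2$ from the rounding step, chained by two applications of the triangle inequality through $\C'_Q$ and $\C_Q$. To match the stated factor $(5+\eps)\delta$, I would run the preprocessing and query with $\eps' := \eps/3$ in place of $\eps$; this absorbs a factor $3^{md}$ into the hidden constants of $O(1/\eps)^{md}$, preserves the $O(kd)$ query time, and converts the approximation guarantee to $(5+3\eps')\delta = (5+\eps)\delta$.

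I do not anticipate any genuine obstacle beyond this rescaling: all the algorithmic ingredients were tuned earlier so that the reporting criterion of Step~(4) and the analysis in the two cases of \lemref{lem:excorrectness} line up exactly. The subtler part already occurred inside \lemref{lem:excorrectness}, specifically in Case~(2), where the $2(1+\eps/2)\delta$-simplification together with the size of $g_{i,j}$ is what forces $\C'_Q\in G_i$ whenever $\delta_{dF}(P_i,Q)\le \delta$, thereby yielding the contradiction that closes the non-reporting direction. Given that lemma, putting the pieces together amounts to bookkeeping plus the substitution $\eps\mapsto\eps/3$.
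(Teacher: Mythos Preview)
Your proposal is correct and matches the paper's approach: the paper simply summarizes the section by stating the theorem immediately after \lemref{lem:excorrectness}, relying on Lemmas~\ref{lem:asymdfprep}, \ref{lem:asymdfspace}, \ref{lem:exquery}, and \ref{lem:excorrectness} without further argument. Your explicit rescaling $\eps\mapsto\eps/3$ to pass from the $(5+3\eps)\delta$ of \lemref{lem:excorrectness} to the claimed $(5+\eps)\delta$ is the natural way to make precise what the paper leaves implicit in the phrase ``as desired when $\eps$ is sufficiently small.''
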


	\section{Approximate subtrajectory range searching queries} \label{sec:ASRS}
	
Another application of our data structure is the capability of solving the approximate subtrajectory range searching (\textsc{$\delta$-ASRS}) problem for when $k$ is small. 
De Berg et al. \cite{bcg-ffq-13} gave an initiatial treatment of the  $\delta$-\textsc{ASRC} problem for the case that $Q$ is only a single line segment and $P$ is given in the plane ($\Reals^2$). 
In their range counting queries they consider only a certain type of subcurves called \emph{`inclusion-minimal subcurves'} 
that are the subcurves with smallest length of Fr\'echet distances at most $\delta$ to $Q$. We will simply use this notion to handle \textsc{$(1+\eps)\delta$-ASRS} queries using our generic data structure proposed in \secref{sec:asymDS}. 
As a beneficial application of our result, especially in \thmref{thm:DS}, one can construct a linear-size data structure that efficiently answers the extended queries for polygonal query curves. Moreover, our data structure can report the subtrajectories within an approximation factor of $(1+\eps)$ in any constant dimension. 
Below we recall a lemma from de Berg et al.~\cite{bcg-ffq-13} that also works for a polygonal curve $Q$ of arbitrary edge length, and not necessarily just for a single line segment. 
\begin{lemma}[Lemmas 2 and 3 in \cite{bcg-ffq-13}] \label{lem:inclusionminimal}
	The following statements are true:
	(1) if there exists a subcurve $P'\subseteq P$ with $\delta_F(P',Q)\leq \delta$ then there exists an inclusion-minimal subcurve $P''\subseteq P'$ such that $\delta_F(P'',Q)\leq \delta$, and (2) all inclusion-minimal subcurves of $P$ are pairwise disjoint.
	
\end{lemma}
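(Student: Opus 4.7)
For part (1), the plan is a continuity-compactness argument on the parameter space of subcurves. A subcurve of $P'$ is determined by a pair $(a,b)$ with $a \le b$ in a compact triangular domain, and the Fr\'echet distance $(a,b)\mapsto\delta_F(P[a,b],Q)$ is continuous in its endpoints by standard properties of the Fr\'echet distance. First I would note that the sublevel set $S=\{(a,b):\delta_F(P[a,b],Q)\le\delta\}$ is closed, hence compact, and nonempty since it contains the parameters of $P'$ itself. Then I would pick $(a^*,b^*)\in S$ minimizing the continuous length function $b-a$, and set $P''=P[a^*,b^*]$. Inclusion-minimality follows at once: any proper subinterval $(a',b')\subsetneq (a^*,b^*)$ has strictly smaller length, so by the minimality of $(a^*,b^*)$ on $S$ it must satisfy $\delta_F(P[a',b'],Q)>\delta$.

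For part (2), the plan is to argue by contradiction. Suppose two distinct inclusion-minimal subcurves $P_1''=P[a_1,b_1]$ and $P_2''=P[a_2,b_2]$ overlap; after relabeling so that $a_1\le a_2$, the subcase $b_2\le b_1$ immediately violates the inclusion-minimality of $P_1''$ (since $P_2''\subsetneq P_1''$ would already have Fr\'echet distance at most $\delta$ to $Q$), so the remaining case is $a_1\le a_2 \le b_1 < b_2$, where the overlap $P[a_2,b_1]$ is a proper subcurve of each $P_j''$. The aim is to exhibit a Fr\'echet matching between $P[a_2,b_1]$ and $Q$ of width at most $\delta$, which then contradicts the inclusion-minimality of $P_1''$. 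To this end, let $\phi_1,\phi_2$ be Fr\'echet matchings witnessing $\delta_F(P_j'',Q)\le\delta$, viewed as monotone paths in the $\delta$-free-space diagram of $P$ and $Q$ running from $(a_j,1)$ to $(b_j,k)$. Restricting $\phi_2$ to its prefix with $P$-parameter at most $b_1$ gives a monotone free-space path from the bottom-left corner $(a_2,1)$ of the overlap rectangle $R=[a_2,b_1]\times[1,k]$ to some point $(b_1,y_2)$ on its right edge; restricting $\phi_1$ to its suffix with $P$-parameter at least $a_2$ gives a monotone free-space path from some point $(a_2,y_1)$ on the left edge of $R$ to the top-right corner $(b_1,k)$. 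Splicing these at any common point inside $R$ yields a monotone free-space path from $(a_2,1)$ to $(b_1,k)$, certifying $\delta_F(P[a_2,b_1],Q)\le\delta$.

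The hard part will be establishing that the two restricted paths share a common point inside $R$. The easy case is when they cross: an intermediate-value argument on the difference of their $y$-coordinates at a common $x$-coordinate supplies the intersection. The subtle case is when one path remains strictly above the other throughout $R$; here I would exploit the freedom to replace $\phi_1,\phi_2$ by any pair of optimal matchings, and use the cell-by-cell convexity of the free-space diagram to push the two paths together until they meet. Once a common point is secured, the splice is immediate, yielding the contradiction and completing the proof.
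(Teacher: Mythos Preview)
Your part (1) is correct but takes a different route from the paper. Where you invoke continuity of $(a,b)\mapsto\delta_F(P[a,b],Q)$ and compactness to extract a length-minimizer, the paper gives an explicit construction: if $\bigcap_i B(q_i,\delta)$ meets $P'$ then any point of that intersection is a (degenerate) inclusion-minimal subcurve; otherwise take $P''$ from the last exit of $B(q_1,\delta)$ to the first entry of $B(q_k,\delta)$ along $P'$. Your argument is cleaner and more robust (it needs no case split and no geometry of the balls); the paper's argument has the advantage of being constructive.

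For part (2) you have probed deeper than the paper, which simply asserts ``Observe that $\delta_F(P^*,Q)\le\delta$'' for $P^*=P_1\cap P_2$ and moves on. You correctly isolate this as the crux. But your handling of the ``subtle case'' has a real gap. The restricted free-space paths need \emph{not} share a point: take $\phi_1$ from $(0,1)$ to $(2,k)$ and $\phi_2$ from $(1,1)$ to $(3,k)$ to be parallel straight segments. Both are monotone, the overlap condition $a_1\le a_2\le b_1\le b_2$ holds, yet they never meet---so there is nothing to splice. Your proposed remedy, ``push the two paths together using cell-by-cell convexity until they meet,'' does not work as stated: when $\phi_1$ sits strictly above-left of $\phi_2$, any segment from a point of $\phi_2$ to a point of $\phi_1$ decreases the $x$-coordinate, so the spliced result is not monotone, and convexity within a cell gives you free \emph{segments}, not monotone ones. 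To close the gap you need a genuinely different argument establishing that whenever $(a_1,1)\to(b_1,k)$ and $(a_2,1)\to(b_2,k)$ are reachable in free space with $a_1\le a_2\le b_1\le b_2$, so is $(a_2,1)\to(b_1,k)$; in the original de~Berg~et~al.\ setting $Q$ is a single segment, so the free-space diagram has just one row and this reachability claim is much easier. The paper's own proof glosses over exactly this point, so you are not behind it---but the step is not yet justified in your write-up either.
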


\begin{proof}
	(1) Let $\I = \cup_{i=1}^{k} B(q_i,\eps)$. If $I\neq \emptyset$ then clearly $P'$ has to intersect $\I$ and the shortest subcurve $P''\subseteq P'$ is a single point of $P \cap \I$ whose Fr\'echet distance to $Q$ is at most $\eps$. Now suppose $\I=\emptyset$ and let $x$ and $y$ be the last and first points along $P'$ that leave and enter the balls $B(q_1,\eps)$ and $B(q_m,\eps)$, respectively. Then $P''$ is the subcurve of $P'$ starting from $x$ and ending at $y$, and having Fr\'echet distance at most $\eps$ to $Q$.
	
	\noindent(2) For the sake of contradiction assume that there are two inclusion-minimal subcurves $P_1$ and $P_2$ that are overlapping along $P$. And by definition we have $\delta_F(P_1,Q)\leq \eps$ and $\delta_F(P_2,Q)\leq \eps$. Observe that $\delta_F(P^*,Q)\leq \eps$ where $P^*=P_1\cap P_2$ and $P^*$ is another subcurve shorter than both $P_1$ and $P_2$. Therefore $P^*$ is an inclusion-minimal subcurve shorter than $P_1$ and $P_2$. This implies that $P_1$ and $P_2$ are no longer inclusion-minimals and therefore this is a contradiction. 
\end{proof}

\noindent{\bf The Algorithm:} The way we exploit the data structure in \thmref{thm:DS} to approximately handle the \textsc{ASRS} queries is as follows: suppose a curve $P$ of size $n$ is given. For every grid path $\C=\segment{c_1\cdots,c_{k}}$ we compute the free space diagram between $P$ and $\C$. This diagram has the domain of $[1,n]\times[1,k]$ and it consists of $(n-1)\times (k-1)$ cells, where each point $(s,t)$ in the diagram corresponds to two points $P(s)$ and $\C(t)$. Given a real $\delta>0$, a point $(s,t)$ in the free space is called \emph{free} if $\|P(s)-\C(t)\|\leq  \delta$ and \emph{blocked}, otherwise. The union of all free points is referred to as the \emph{free space}. The Fr\'echet distance is at most $\delta$ if there exists a monotone path across the free space (see \cite{ag-cfdb-95} for further details on free space diagram). First realize that if there is a subcurve whose Fr\'echet distance is small to $\C$ then there exists an inclusion-minimal subcurve as well by~\lemref{lem:inclusionminimal} (if $\C$ is very short then the inclusion-minimal subcurve would be a single point which is consistent with what we aim for). Without loss of generality assume that $\C$ is aligned along the horizontal axis of the free space. We compute the inclusion-minimal subcurves along $P$ with respect to $\C$ by staying on $c_1$ in free space and going upward until we hit a blocked space, i.e., a point. 
We mark it as the starting point and then we use the classical Alt and Godau's dynamic programming algorithm~\cite{ag-cfdb-95} to reach $c_{k}$ of the lowest free point through a monotone path. This can be done by propagating the rightmost reachable cells of the free space. Since by~\lemref{lem:inclusionminimal} all inclusion-minimal subcurves are disjoint, this entire process takes $O(nk)$ time by repeatedly starting the process from $c_1$ to find inclusion-minimal subcurve. We store the subcurves into a bucket with id associated with $\C$'s id. The index we use for each subcurve to store in the bucket is simply the concatenation of its starting and ending points along $P$. In the query algorithm, we only need to retrieve the inclusion-minimal subcurves stored in the hashtable that are close enough to the rounded query grid path. 

Given that the number of cells presented in \secref{sec:asymDS} is $\N=\max\big\{\big(\frac{1}{\eps}\big)^d, \big(\frac{\D'}{\eps^2}\big)^d\big\}$, where $d$ is a constant, and there are $O(2^{dk}\N^k)$ grid paths of length $k$ and $O(n)$ inclusion-minimal subcurves to store per grid path $\C$ (due to their disjointness property), the space required is:

\begin{center}
	$O\big(n2^{kd}\N^{k}\big)= n \cdot O\big(\max\big\{\big(\frac{1}{\eps}\big)^{kd}, \big(\frac{\D}{\eps^2}\big)^{kd}\big\}\big)$, 
\end{center}

since $\D'\leq \D $ by \lemref{lem:diam}.
The preprocessing is only computing the free space between $P$ and $\C$ in $O(nk) $ time and then the bottom up traversal of the free space along $P$'s axis of it, therefore it takes:
$nk\cdot O\big(\max\big\{\big(\frac{1}{\eps}\big)^{kd}, \big(\frac{\D}{\eps^2}\big)^{kd}\big\}\big)$ time.
We have the following theorem:

\begin{theorem} \label{thm:ASRS}
	Let $P$ be a curve with $n$ vertices in $\Reals^d$, $\delta > 0$ be a real value. For any sufficiently small $\eps>0$, one can construct a data structure of size $n \cdot O\big(\max\big\{\big(\frac{1}{\eps}\big)^{kd}, \big(\frac{\D}{\eps^2}\big)^{kd}\big\}\big)$ and construction time $nk \cdot O\big(\max\big\{\big(\frac{1}{\eps}\big)^{kd}, \big(\frac{\D}{\eps^2}\big)^{kd}\big\}\big)$ such that for any polygonal query curve $Q$ of size $k$ it computes the \textsc{$(1+\eps)\delta$-ASRS} under the Fr\'echet distance in $O(k)$ query time, where $d$ is a constant.  
\end{theorem}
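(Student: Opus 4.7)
The plan is to recycle the grid-path enumeration from Section~\ref{sec:asymDS}, but replace the ``store curve indices'' step of \algref{alg:prep} by a free-space computation that extracts, for each grid path, the inclusion-minimal subcurves of $P$ witnessing $(1+\eps/2)\delta$-close matches. First I would build the same bounded grid $\G$ via lines~\ref{step:diam}--\ref{step:grid} of \algref{alg:prep}. Since $d$ is constant the $\sqrt{d}$ factors collapse into the $O(\cdot)$ notation, and by the counting argument of \secref{sec:asymDS} together with \lemref{lem:diam} (which gives $\D'\leq \D$), the number of $k$-vertex grid paths in $\G$ is $O(2^{dk}\N^{k}) = O\big(\max\{(1/\eps)^{kd}, (\D/\eps^2)^{kd}\}\big)$.

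For each grid path $\C = \segment{c_1,\ldots,c_k}$ I would compute the $(1+\eps/2)\delta$-free space diagram between $P$ and $\C$ in $O(nk)$ time using Alt and Godau~\cite{ag-cfdb-95}. Starting from the $c_1$-column, I sweep upward to the first free point and then propagate the rightmost reachable intervals to obtain a monotone path terminating at the lowest free point of the $c_k$-column, which identifies one inclusion-minimal subcurve of $P$; I restart the procedure from the exit point along $P$ to extract the next one. By \lemref{lem:inclusionminimal}(2), all such inclusion-minimal subcurves are pairwise disjoint along $P$, so there are at most $O(n)$ of them per grid path and the extraction costs $O(nk)$ per grid path. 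I store each extracted subcurve (by its two endpoint parameters along $P$) in a bucket keyed by the id of $\C$ obtained by concatenating grid-point ids. Summing over grid paths yields the claimed space and preprocessing bounds.

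The query algorithm is the scaled-and-rounded variant from~\subsecref{subsec:query}: scale $\G$ once during preprocessing so that every cell has unit side length, then for the query $Q = \segment{q_1,\ldots,q_k}$ round each vertex in $O(d)=O(1)$ time to obtain a grid path $\C_Q$ whose concatenated id directly indexes the bucket. Reporting the stored subcurve endpoints takes $O(k)$ time, output-sensitive as in Section~\ref{sec:preliminary}.

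Correctness reduces to two triangle-inequality applications mirroring \lemref{lem:approx}. By \lemref{lem:closegridpath}, $\delta_F(Q,\C_Q) \leq \eps\delta/2$. If a subcurve $P' \subseteq P$ is reported, then $\delta_F(P',\C_Q) \leq (1+\eps/2)\delta$, so $\delta_F(P',Q) \leq (1+\eps)\delta$. Conversely, whenever some $P'' \subseteq P$ satisfies $\delta_F(P'',Q) \leq \delta$, \lemref{lem:inclusionminimal}(1) supplies an inclusion-minimal $P^{\star} \subseteq P''$ with $\delta_F(P^{\star},Q) \leq \delta$; then $\delta_F(P^{\star},\C_Q) \leq \delta + \eps\delta/2 = (1+\eps/2)\delta$, so the bottom-up free-space traversal associated with $\C_Q$ has detected $P^{\star}$ and stored it in the corresponding bucket, whence it is reported. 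The main obstacle I foresee is verifying that the greedy monotone-path extraction genuinely enumerates \emph{every} inclusion-minimal subcurve of $P$ at distance at most $(1+\eps/2)\delta$ from each grid path (rather than skipping some), which requires a careful ``first-entry'' / ``last-exit'' argument on consecutive reachable intervals within the free space diagram, combined with the disjointness guarantee of \lemref{lem:inclusionminimal}(2) to keep the count at $O(n)$ per grid path and so preserve the advertised space bound.
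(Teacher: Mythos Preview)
Your proposal is correct and follows essentially the same approach as the paper: reuse the bounded grid and grid-path enumeration from \secref{sec:asymDS}, replace line~\ref{step:store} of \algref{alg:prep} by an $O(nk)$ free-space sweep that extracts and stores the inclusion-minimal subcurves (using \lemref{lem:inclusionminimal} for existence and disjointness), and answer queries by rounding $Q$ to a grid path and looking up its bucket. Your explicit triangle-inequality correctness argument is in fact more detailed than what the paper spells out for this theorem, and the obstacle you flag about the greedy extraction enumerating all inclusion-minimal subcurves is exactly the point the paper also leaves at the level of a description rather than a formal proof.
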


\section{Approximate Time-Window Queries for Spatial Density Maps} \label{sec:timewindow}
In these type of queries,
we assume that all points are contained inside of a polygononal subdivision (map) $M$.
Suppose we are given a set of subdivisions (regions) $\R = \{r_1, \cdots, r_m\}$ of $M$ and a time-stamped point set $\S = \{ (s_1, t_1) , (s_2,t_2), \cdots, (s_n, t_n)\}$ where each point $s_i \in \Reals^d$ appears at time $t_i \in \Reals$ on $M$, and an integer $\theta>0$. W.l.o.g. assume that, for all $1\leq i\leq n$, $0\leq t_i<1$. The aim is to preprocess $\S$ and $\R$ into a data structure so that for any query time-window $W = [q_1,q_2]$ with $q_1 <q_2$, return those subdivisions in $\R$ that contain at least $\theta$ points in $\S$ whose times are within $W$.  
We first start with the preprocessing algorithm below:

\subsection{Preprocessing algorithm}

We describe our preprocessing algorithm below. The input parameters for the algorithm are $\S$, $
\R$, $\theta$, and $\eps$:

\noindent {\bf Step (1):} Compute the smallest and largest times over all samples, i.e., $t_{min}= \min_{1 \leq i \leq n} t_i$ and $t_{max}= \max_{1 \leq i \leq n} t_i$.\\
\noindent {\bf Step (2):} Consider $\I = [t_{max},~2t_{max}-t_{min}]$ and shift the times of all points in $\S$ by $t_{max}$ to fall into the range of $\I$, i.e., $\overrightarrow{t_i} = t_i+t_{max}$, for all $1\leq i\leq n$. \\
\noindent {\bf Step (3):} Subdivide $\I$ into a set of subintervals each of length $\ell = \eps \cdot t_{max}$ and denote the set of the endpoints of the subdivided intervals by $C$. 
\\
\noindent  {\bf Step (4):} For every pair of candidate endpoints $c_1,c_2 \in C$ with $c_1<c_2$, set $W' = [c_1, c_2]$. \\
\noindent  {\bf Step (5):} Associate a bucket in the hashtable to $W'$. Store those regions in $\R$ into the bucket whose number of points is at least $\theta$ fulfilling the time window $W'$.

\begin{lemma} \label{lem:TWsizeC}
	$|C|=O(1/\eps)$. 
\end{lemma}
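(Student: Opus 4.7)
The plan is to bound $|C|$ directly from the definitions of $\I$, $\ell$, and the range of $t_i$. Since $C$ consists of the endpoints of the subintervals that partition $\I$ into pieces of length $\ell$, we have $|C| = \lceil |\I| / \ell \rceil + 1$, so the whole task reduces to upper-bounding the ratio $|\I|/\ell$ by $O(1/\eps)$.

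First I would compute the length of $\I = [t_{max},\, 2t_{max}-t_{min}]$, which is simply $(2t_{max}-t_{min}) - t_{max} = t_{max} - t_{min}$. Next, I would invoke the standing assumption on the time-stamps (stated just before the Preprocessing algorithm) that $0 \leq t_i < 1$ for all $i$, which forces $t_{min} \geq 0$, and hence $|\I| = t_{max} - t_{min} \leq t_{max}$.

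Finally, I would plug in the chosen cell size $\ell = \eps \cdot t_{max}$ to obtain
\[
\frac{|\I|}{\ell} \;=\; \frac{t_{max} - t_{min}}{\eps \cdot t_{max}} \;\leq\; \frac{t_{max}}{\eps \cdot t_{max}} \;=\; \frac{1}{\eps},
\]
so $|C| \leq \lceil 1/\eps \rceil + 1 = O(1/\eps)$, as claimed.

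I do not anticipate any real obstacle here; the statement is essentially a one-line counting consequence of the definitions, provided one is careful about the normalization $t_{min}\geq 0$. The only subtle point is that one must use the assumption $0 \leq t_i < 1$ to get $t_{max} - t_{min} \leq t_{max}$; without the lower bound $t_{min}\geq 0$ the ratio could be larger, so the proof should explicitly flag this step.
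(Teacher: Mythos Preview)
Your proposal is correct and follows essentially the same computation as the paper: compute $|\I| = t_{\max}-t_{\min}$, divide by $\ell = \eps\, t_{\max}$, and use $t_{\min}\geq 0$ to bound the ratio by $1/\eps$. If anything, you are slightly more careful than the paper, which writes $|C| = |\I|/\ell$ without the ceiling or the extra endpoint and leaves the step $t_{\max}-t_{\min} < t_{\max}$ unjustified.
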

\begin{proof}
	The number of subintervals induced by the endpoints in $C$ is  clearly:

	$$|C| = \frac{|\I|}{\eps t_{max}} = \frac{2t_{max}-t_{min}- t_{max}}{\eps t_{max}} = \frac{t_{max}-t_{min}}{\eps t_{max}} < \frac{t_{max}}{\eps t_{max}} = O\Big(\frac{1}{\eps}\Big).$$
\end{proof}

\begin{lemma} \label{lem:TWspace_prep}
	The space and construction time of the data structure are $O(m/\eps^2)$ and $O((n+m)/\eps^2)$, respectively. 
	
\end{lemma}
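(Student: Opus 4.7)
The plan is to verify both bounds by analyzing the preprocessing algorithm, after spelling out an efficient implementation of Step~(5). For the space bound, \lemref{lem:TWsizeC} gives $|C|=O(1/\eps)$, so Step~(4) creates $\binom{|C|}{2}=O(1/\eps^2)$ candidate windows $W'$. Each associated bucket stores at most $m$ region identifiers, so the hashtable uses $O(m/\eps^2)$ space overall, and this dominates any auxiliary structures.

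For the construction-time bound, the main observation is that a naive implementation that rescans the $n$ points for each of the $O(1/\eps^2)$ candidate windows would cost $\Theta(n/\eps^2)$, which exceeds the target. Instead, I would maintain, for every region $r\in\R$, a vector of counts $c_r(j)$ recording the number of points of $r$ whose shifted time lies in the $j$-th subinterval of $\I$. Producing these vectors takes $O(n)$ time by a single pass that buckets each point by region and subinterval (assuming constant-time point-in-region lookup after a standard linear-time preprocessing of $M$). Building a prefix-sum array over the $O(1/\eps)$ counters of each region costs $O(m/\eps)$ in total, and the count of points of $r$ inside any window $W'$ can then be retrieved in $O(1)$ via two lookups.

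With the prefix sums in place, Step~(5) iterates over each of the $O(1/\eps^2)$ windows and each of the $m$ regions, testing in $O(1)$ whether the count is at least $\theta$, and inserting the region into the bucket if so. This loop takes $O(m/\eps^2)$ time. Combining with the $O(n)$ bucketing and $O(m/\eps)$ prefix-sum setup gives total construction time $O(n+m/\eps^2)=O((n+m)/\eps^2)$, as claimed. The only delicate point in the argument is Step~(5): avoiding the naive $\Theta(n/\eps^2)$ blow-up requires the per-region prefix-sum trick sketched above; the rest is routine bookkeeping.
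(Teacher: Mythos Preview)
Your space argument matches the paper's exactly. For the construction time, however, you have over-engineered the proof based on a mistaken premise. You write that the naive approach ``would cost $\Theta(n/\eps^2)$, which exceeds the target,'' but $\Theta(n/\eps^2)$ is trivially $O((n+m)/\eps^2)$, so the naive approach already meets the claimed bound. The paper does precisely this: for each of the $O(1/\eps^2)$ candidate windows $W'$, it spends $O(n+m)$ time scanning points and regions to decide which regions have at least $\theta$ qualifying points, yielding $O((n+m)/\eps^2)$ overall.

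Your prefix-sum construction is not wrong---in fact it yields the sharper bound $O(n+m/\eps^2)$---but it is unnecessary for the lemma as stated, and it introduces an extra assumption (constant-time point-to-region lookup after linear preprocessing) that is not entirely innocent for general planar subdivisions. The paper's simpler per-window scan avoids this subtlety. So: your proof is valid, but the motivating claim that the naive method is insufficient is false, and the paper's argument is considerably shorter.
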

\begin{proof}
	First realize that the space of the data structure is dominated by the size of the hashtable. On the other hand the size of the hashtable is the number of buckets times the number of regions stored into each bucket. Following Lemma~\ref{lem:TWsizeC}, $|C|= O(1/\eps)$ and since the number of buckets is equal to the number of candidate time window $W'= [c_1, c_2]$, and there are $O(|C|^2)$ combinations of $c_1$ and $c_2$ inducing $W'$, there are $O(1/ \eps^2)$ buckets in total. Each bucket can consist of at most $m$ regions in $\R$, therefore the size of the data structure is $O(m/\eps^2)$. 
	
	The construction time is clearly dominated by linearly processing of regions and checking whether each region contains at least $\theta$ points whose times are within $W'$ or not. This takes $O(n+m)$ time, therefore, it would take $O((n+m)/\eps^2)$ overall. 
\end{proof}

\subsection{Query algorithm}

Given a query time-window $W = [q_1, q_2]$, shift $W$ to $\overrightarrow{W} = \big[\overrightarrow{q_1}, \overrightarrow{q_2}\big]$, where $\overrightarrow{q_1} = q_1+t_{max}$ and $\overrightarrow{q_2} = q_2+t_{max}$. Round up $\overrightarrow{q_1}$ and $\overrightarrow{q_2}$ to get  $\overrightarrow{q_1}^+  = \lceil a \rceil$ and $ \overrightarrow{q_2}^+ = \lceil b \rceil$. Similarly, round down $\overrightarrow{q_1}$ and $\overrightarrow{q_2}$ to get $ \overrightarrow{q_1}^- = \lfloor  q^\rightarrow_1 \rfloor$ and $ \overrightarrow{q_2}^- = \lfloor q^\rightarrow_2 \rfloor$. Now consider two time-windows $\overrightarrow{W_1}  = [\overrightarrow{q_1}^+ , \overrightarrow{q_2}^-]$ and $\overrightarrow{W_2}= [\overrightarrow{q_1}^- , \overrightarrow{q_2}^+]$. Retrieve the buckets in the hashtable associated with $\overrightarrow{W_1}$ and $\overrightarrow{W_2}$. Note that there are necessarily two buckets associated with $\overrightarrow{W_1}$ and $\overrightarrow{W_2}$ in the hashtable since $\overrightarrow{q_1}^- ,\overrightarrow{q_1}^+, \overrightarrow{q_2}^-, \overrightarrow{q_2}^+ \in C$. Using the following lemmas we obtain our main theorem in this section:

\begin{lemma} \label{lem:TWquery}
	The query algorithm takes $O(1)$ time.
\end{lemma}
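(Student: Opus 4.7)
The plan is to walk through each step of the query algorithm and argue that it performs only a constant number of elementary operations, independent of $n$, $m$, and $\eps$. The algorithm's steps are: (i) shift $W=[q_1,q_2]$ by the precomputed constant $t_{max}$ to obtain $\overrightarrow{W}$; (ii) round the two endpoints up and down to land on neighboring endpoints in $C$; (iii) form the two windows $\overrightarrow{W_1}$ and $\overrightarrow{W_2}$; and (iv) look up the two corresponding buckets in the hashtable.

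First, I would note that in Step (i), $t_{max}$ is stored with the data structure during preprocessing, so computing $\overrightarrow{q_1}$ and $\overrightarrow{q_2}$ requires only two additions. In Step (ii), because every endpoint in $C$ is an integer multiple of $\ell=\eps\cdot t_{max}$ shifted by $t_{max}$, the rounding-up and rounding-down operations on $\overrightarrow{q_1}$ and $\overrightarrow{q_2}$ reduce to dividing by $\ell$ and taking a floor or ceiling. These are $O(1)$ real-RAM operations each, so Steps (ii) and (iii) together take constant time.

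The only step that is not obviously $O(1)$ is Step (iv). The key observation is that $\overrightarrow{q_1}^{-},\overrightarrow{q_1}^{+},\overrightarrow{q_2}^{-},\overrightarrow{q_2}^{+}\in C$ by construction of the rounding and of $C$, so both $\overrightarrow{W_1}$ and $\overrightarrow{W_2}$ coincide with candidate windows $W'=[c_1,c_2]$ that were hashed during preprocessing. Hence each lookup goes to an existing bucket and completes in $O(1)$ expected time under a standard hash function (or worst-case $O(1)$ if one prefers a perfect hash over the $O(1/\eps^2)$ candidate windows, which can be built in the preprocessing phase without affecting the bounds in Lemma~\ref{lem:TWspace_prep}).

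Finally, since the lemma (as stated by the author) deliberately excludes the cost of enumerating the reported regions (this is noted globally as an output-sensitive convention at the end of the introduction), summing the constant-time contributions of Steps (i)--(iv) yields $O(1)$ total query time, which completes the proof. The only mildly subtle point to address clearly is that the two rounded window endpoints really do lie in $C$, which is immediate from the uniform spacing of $C$ and the definition of the rounding.
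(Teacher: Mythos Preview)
Your proof is correct and follows essentially the same approach as the paper: both simply walk through the shift, rounding, and hashtable-lookup steps and observe that each requires only a constant number of operations. Your version is more detailed (explicitly justifying the rounding via division by $\ell$, invoking the output-sensitive convention, and mentioning perfect hashing), but the underlying argument is identical.
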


\begin{proof}
	The query algorithm is nothing but a simple shift operation on the endpoints of $W$ which takes $O(1)$ time. Also rounding up and down the resulting endpoints together with retrieving associated buckets from the hashtable take $O(1)$ time overall.
\end{proof}

\begin{lemma} \label{lem:TWapprox}
	For any $0 <\eps < 1/t_{max}-1$, given a query window $W = [q_1, q_2]$, the query algorithm returns two sets of regions $S_1$ and $S_2$ in $\R$ w.r.t. $[(1+\eps)q_1,(1-\eps)q_2]$ and $[(1-\eps)q_1,(1+\eps)q_2]$, respectively, such that $S_1 \subseteq S^*\subseteq S_2$ where $S^*$ is a solution w.r.t. $W$.
\end{lemma}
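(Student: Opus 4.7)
The plan is to convert the rounding operations into two-sided window containments and then lift those to region-set containments via monotonicity of the density predicate. Write $\overrightarrow{q_i}=q_i+t_{max}$ and let $\overrightarrow{W_1}=[\overrightarrow{q_1}^{+},\overrightarrow{q_2}^{-}]$ and $\overrightarrow{W_2}=[\overrightarrow{q_1}^{-},\overrightarrow{q_2}^{+}]$ be the inner and outer rounded windows produced by the query. Since $C$ has spacing $\ell=\eps\cdot t_{max}$, each rounding operation moves an endpoint by at most $\ell$, so $0\le \overrightarrow{q_i}^{+}-\overrightarrow{q_i}\le \eps\,t_{max}$ and $0\le \overrightarrow{q_i}-\overrightarrow{q_i}^{-}\le \eps\,t_{max}$.

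From these one-sided rounding estimates, after shifting back by $t_{max}$ the inner endpoints satisfy $q_1 \le \overrightarrow{q_1}^{+}-t_{max} \le q_1+\eps\,t_{max}$ and $q_2-\eps\,t_{max} \le \overrightarrow{q_2}^{-}-t_{max} \le q_2$, which already yields $\overrightarrow{W_1}-t_{max} \subseteq W$. In the natural regime $q_1,q_2\ge t_{max}$, the additive slack $\eps\,t_{max}$ is dominated by the relative slack $\eps\,q_i$, so the inner window additionally contains $[(1+\eps)q_1,(1-\eps)q_2]$. The symmetric argument on the outer endpoints gives $W \subseteq \overrightarrow{W_2}-t_{max} \subseteq [(1-\eps)q_1,(1+\eps)q_2]$, producing the chain
\[
[(1+\eps)q_1,(1-\eps)q_2] \;\subseteq\; \overrightarrow{W_1}-t_{max} \;\subseteq\; W \;\subseteq\; \overrightarrow{W_2}-t_{max} \;\subseteq\; [(1-\eps)q_1,(1+\eps)q_2].
\]

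To finish, I would invoke the monotonicity of the density predicate: if $W'\subseteq W''$ then any region with at least $\theta$ timestamps inside $W'$ has at least $\theta$ inside $W''$. Let $S_1'$ and $S_2'$ denote the region sets returned from the buckets of $\overrightarrow{W_1}$ and $\overrightarrow{W_2}$; by Step~(5) of the preprocessing these are precisely the density answers for those two windows. Applying monotonicity to the displayed chain gives $S_1 \subseteq S_1' \subseteq S^{*} \subseteq S_2' \subseteq S_2$, which contains the desired inclusion $S_1\subseteq S^{*}\subseteq S_2$. The main technical obstacle is pinning down the regime: the step $q_i+\eps\,t_{max}\le (1+\eps)q_i$ requires $t_{max}\le q_i$, which must be argued as the meaningful case (otherwise $W$ does not intersect the data support), and the hypothesis $\eps<1/t_{max}-1$ is used to keep $(1+\eps)\,t_{max}<1$ so that the rounded endpoints still lie in $\I$ and the buckets for both $\overrightarrow{W_1}$ and $\overrightarrow{W_2}$ are indeed among those constructed in Step~(4).
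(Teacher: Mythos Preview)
Your approach is the same as the paper's: bound the rounding displacement by $\eps\,t_{max}$, convert the additive slack to a multiplicative one, and then invoke monotonicity of the density predicate to sandwich $S^{*}$. The one real gap is in the additive-to-multiplicative step. You perform this conversion in the \emph{original} time coordinates, which forces the hypothesis $q_i\ge t_{max}$; the justification you give (``otherwise $W$ does not intersect the data support'') is incorrect, since $q_1$ can perfectly well lie in $[t_{min},t_{max})$ and the query is still meaningful there.

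The paper avoids this by doing the conversion in the \emph{shifted} coordinates. Because $\overrightarrow{q_i}=q_i+t_{max}\ge t_{max}$ whenever $q_i\ge 0$, the inequalities $\overrightarrow{q_i}+\eps\,t_{max}\le(1+\eps)\overrightarrow{q_i}$ and $\overrightarrow{q_i}-\eps\,t_{max}\ge(1-\eps)\overrightarrow{q_i}$ hold automatically, with no extra assumption on the query. Guaranteeing this is precisely the purpose of the shift in Step~(2) of the preprocessing. If you postpone shifting back until after the multiplicative bound is in hand, your containment chain goes through without the regime restriction, and the remainder of your argument (the monotonicity step) is then identical to the paper's.
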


\begin{proof}
	First note that $S_1$ is the set of regions returned w.r.t. the time-window $\overrightarrow{W_1} = [\overrightarrow{q_1}^+, \overrightarrow{q_2}^-]$ and $S_2$ is the set of regions returned w.r.t. the time-window  $\overrightarrow{W_2} = [\overrightarrow{q_1}^-, \overrightarrow{q_2}^+]$. For a query $W\rightarrow = [a,b]$, it holds that $ \overrightarrow{q_1}^+ - \overrightarrow{q_1}  \leq  \|\overrightarrow{q_1}^+ - \overrightarrow{q_1}^- \|= \eps t_{max}$.  Similarly we have $\overrightarrow{q_1} - \overrightarrow{q_1}^-$, $\overrightarrow{q_2} - \overrightarrow{q_2}^-$, and $\overrightarrow{q_2}^+ - \overrightarrow{q_2}$ at most $\eps t_{max}$.  On other hand, $t_{max}  \leq \overrightarrow{q_1} , \overrightarrow{q_2}$, thus:
	$$\overrightarrow{q_1}^+ \leq \overrightarrow{q_1} + \eps t_{max}= \overrightarrow{q_1} +\eps \overrightarrow{q_1} = (1+\eps)\overrightarrow{q_1} = (1+\eps)(q_1+t_{max})= (1+\eps)q_1,$$
	for any $\eps< 1/t_{max}-1$. 	  
	Using a similar argument, it is not hard to see that $\overrightarrow{q_2}^- \geq (1-\eps)q_2$,  $\overrightarrow{q_1}^- \geq (1-\eps)q_1$, and $\overrightarrow{q_2}^+ \leq (1+\eps)q_2$.   Therefore $\overrightarrow{W_1} = [(1+\eps)q_1,(1-\eps)q_2]$ and $\overrightarrow{W_2} = [(1-\eps)q_1,(1+\eps)q_2]$.  
	
	The optimal set of regions $S^*$ sandwiched between $S_1$ and $S_2$ relies on the fact that for every $\overrightarrow{W_1}$ and $\overrightarrow{W_2}$ with $\overrightarrow{W_1} \subseteq \overrightarrow{W_2}$, $\overrightarrow{W_2}$ contains the regions returned w.r.t. $\overrightarrow{W_1} $ and possibly some more~\cite{bndooh-twdsspd-20}. Therefore $S_1 \subseteq S^* \subseteq S_2$.  
\end{proof}

\begin{theorem} \label{thm:TW}
	For any $0 <\eps < 1/t_{max}-1$, one can build a data structure of size $O(m/\eps^2)$ and construction time $O((n+m)/\eps^2)$ to handle the time-window queries approximately in $O(1)$ query time.
\end{theorem}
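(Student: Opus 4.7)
The plan is to obtain \thmref{thm:TW} as a direct consequence of the three lemmas already established in this section, since essentially all of the real work has been done there. The theorem is a clean packaging result: it combines the bounds on storage, preprocessing cost, and query time, and leaves the approximation guarantee implicit through the word ``approximately'' (whose precise meaning is \lemref{lem:TWapprox}).

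First, I would invoke \lemref{lem:TWspace_prep} to immediately extract the $O(m/\eps^2)$ space bound and the $O((n+m)/\eps^2)$ construction time bound. That lemma counts the $O(1/\eps^2)$ buckets of the hashtable (each indexed by an ordered pair of candidates from $C$, which has size $O(1/\eps)$ by \lemref{lem:TWsizeC}) and notes that each bucket stores at most $m$ regions; the construction simply scans the $n$ samples and the $m$ regions once per candidate window, yielding the claimed construction time. No additional argument is required here beyond citing the lemma.

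Next, I would invoke \lemref{lem:TWquery} to argue the $O(1)$ query time: given $W=[q_1,q_2]$, shifting by $t_{max}$, rounding to the nearest elements of $C$ on each side, and performing two hashtable lookups to fetch the buckets for $\overrightarrow{W_1}$ and $\overrightarrow{W_2}$ are all constant-time operations (modulo the output, which we agreed to suppress at the end of Section 1). Finally, for the meaning of ``approximately,'' I would cite \lemref{lem:TWapprox}, which guarantees that the two returned sets $S_1$ and $S_2$ sandwich the true answer $S^*$ as $S_1\subseteq S^*\subseteq S_2$, corresponding to the windows $[(1+\eps)q_1,(1-\eps)q_2]$ and $[(1-\eps)q_1,(1+\eps)q_2]$, respectively, under the hypothesis $0<\eps<1/t_{max}-1$.

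Since this proof is entirely assembly from prior lemmas, I do not expect any real obstacle; the only thing to be mildly careful about is to match the hypothesis $0<\eps<1/t_{max}-1$ of \lemref{lem:TWapprox} with the theorem statement, which already carries the same restriction, and to make clear that the ``two-sided sandwich'' is what the theorem means by \emph{approximately}. A one or two sentence proof citing \lemref{lem:TWspace_prep}, \lemref{lem:TWquery}, and \lemref{lem:TWapprox} should suffice.
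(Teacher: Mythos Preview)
Your proposal is correct and matches the paper's approach exactly: the theorem is stated without an explicit proof in the paper, serving purely as a packaging of \lemref{lem:TWspace_prep}, \lemref{lem:TWquery}, and \lemref{lem:TWapprox}, which is precisely the assembly you describe. Your one- or two-sentence proof citing those three lemmas is all that is needed.
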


	\section{Concluding remarks}
	In this paper we considered  \textsc{ANNS} queries among curves under the Fr\'echet distance and proposed the first result under the continuous Fr\'echet distance in $\Reals^d$.  Our data structure is fully deterministic and simple. In fact our method is a simplification of the ones presented in~\cite{ffk-anncsed-19, dps-sdfsfq-19}. Our data structure can handle both discrete and continuous Fr\'echet distances in asymmetric case and only discrete Fr\'echet in the symmetric case, as well as the \textsc{ASRS} queries more efficiently. In the end, we also looked into the \textsc{TWD} queries studied by Bonerath et al.~\cite{bndooh-twdsspd-20}. We proposed an approximate data structure of linear size and preprocessing time that can approximately handle time-window queries in constant query time.
	An interesting problem to consider, for the future studies, would be to compute the optimization version of  the \textsc{ANNS}, i.e., approximate nearest neighbor problem.
	\bibliographystyle{plainurl}
	\bibliography{datastructure}

\providecommand{\NOOPSORT}[1]{}
\begin{thebibliography}{10}

\bibitem{aaks-cdfds-14}
P.K.\ Agarwal, R.B.\ Avraham, H.\ Kaplan, and M.\ Sharir.
\newblock Computing the discrete {F}r\'{e}chet distance in subquadratic time.
\newblock {\em SIAM Journal on Computing}, 43:429--449, 2014.

\bibitem{ag-cfdb-95}
H.\ Alt and M.\ Godau.
\newblock Computing the {F}r\'{e}chet distance between two polygonal curves.
\newblock {\em International Journal of Computational Geometry and
  Applications}, 5(1--2):75--91, 1995.

\bibitem{amnsw-oaanns-98}
S.\ Arya, D.\ Mount, N.\ Netanyahu, R.\ Silverman, and A.Y.\ Wu.
\newblock An optimal algorithm for approximate nearest neighbor searching.
\newblock {\em Journal of the ACM}, 45(6):891--923, 1998.

\bibitem{b-mdbstas-75}
J.L.\ Bentley.
\newblock Multidimensional binary search trees used for associative searching.
\newblock {\em Communications of the ACM}, 18(9):509--5017, 1975.
\newblock URL: \url{https://doi.org/10.1145/361002.361007}.

\bibitem{bs-rnaspsaknn-06}
E.\ Bindewald and B.A.\ Shapiro.
\newblock Rna secondary structure prediction from sequence alignments using a
  network of k-nearest neighbor classifiers.
\newblock {\em RNA}, 12(3):342--352, 2006.

\bibitem{bndooh-twdsspd-20}
A.\ Bonerath, B.\ Niedermann, J.\ Diederich, Y.\ Orgeig, J.\ Oehrlein, and
  J.H.\ Haunert.
\newblock A time-windowed data structure for spatial density maps.
\newblock In {\em Proceedings of the 28th International Conference on Advances
  in Geographic Information Systems (SIGSPATIAL'20)}, pages 15--24.

\bibitem{diim-lshspd-04}
M.\ Datar, N.\ Immorlica, P.\ Indyk, and V.\ Mirrokni.
\newblock Locality-sensitive hashing scheme based on p-stable distributions.
\newblock In {\em Proceedings of the ACM Symposium on Computational Geometry
  (SoCG)}, pages 253--262, 2004.

\bibitem{bcg-ffq-13}
M.\ de~Berg, A.F.~Cook IV, and J.\ Gudmundsson.
\newblock Fast {F}r\'{e}chet queries.
\newblock {\em Computational Geometry -- Theory and Applications},
  46(6):747--755, 2013.

\bibitem{dmo-dsfqtd-17}
M.\ de~Berg, A.\ Mehrabi, and T.\ Ophelders.
\newblock Data structures for {F}r\'{e}chet queries in trajectory data.
\newblock In {\em Canadian Conference on Computational Geometry (CCCG)}, pages
  214--219, 2017.

\bibitem{dh-jydcf-13}
A.\ Driemel and S.\ Har-Peled.
\newblock Jaywalking your dog: computing the {F}r\'{e}chet distance with
  shortcuts.
\newblock {\em SIAM Journal on Computing}, 42:1830--1866, 2013.

\bibitem{dhw-cfdrinlt-12}
A.\ Driemel, S.\ Har-Peled, and C.\ Wenk.
\newblock Approximating the {F}r{\'e}chet distance for realistic curves in near
  linear time.
\newblock {\em Discrete \& Computational Geometry}, 48(1):94--127, 2012.
\newblock URL: \url{https://doi.org/10.1007/s00454-012-9402-z}.

\bibitem{dp-2anntsfd-21}
A.\ Driemel and I.\ Psarros.
\newblock $(2+\eps)$-{ANN} for time series under the {F}r\'{e}chet distance.
\newblock \url{https://arxiv.org/abs/2008.09406}, 2021.

\bibitem{dps-sdfsfq-19}
A.\ Driemel, I.\ Psarros, and M.\ Schmidt.
\newblock Sublinear data structures for short {F}r\'{e}chet queries.
\newblock \url{https://arxiv.org/abs/1907.04420}, 2019.

\bibitem{ds-lshfc-17}
A.\ Driemel and F.\ Silvestri.
\newblock Locality-sensitive hashing of curve.
\newblock In {\em Symposium on Computational Geometry (SoCG)}, number~37, pages
  1--16, 2017.

\bibitem{eghmm-nsmbp-02}
A.\ Efrat, L.J.\ Guibas, S.\ Har-Peled, J.S.B.\ Mitchell, and T.M Murali.
\newblock New similarity measures between polylines with applications to
  morphing and polygon sweeping.
\newblock {\em Discrete \& Computational Geometry}, 28(4):535--569, 2002.

\bibitem{ep-pemapqmc-18}
I.\ Emiris and I.\ Psarros.
\newblock Products of euclidean metrics and applications to proximity questions
  among curves.
\newblock In {\em Symposium on Computational Geometry (SoCG)}, volume~99, pages
  1--17, 2018.

\bibitem{ffk-anncsed-19}
A.\ Filtser, O.\ Filtser, and M.J.\ Katz.
\newblock Approximate nearest neighbor for curves - simple, efficient, and
  deterministic.
\newblock In {\em 47th International Colloquium on Automata, Languages, and
  Programming (ICALP 2020)}, Leibniz International Proceedings in Informatics
  (LIPIcs), pages 48:1--48:19, 2020.

\bibitem{f-sqpcf-06}
M.\ Fr\'{e}chet.
\newblock Sur quelques points du calcul fonctionnel.
\newblock {\em Rendiconti del Circolo Mathematico di Palermo}, 22:1--74, 1906.

\bibitem{glw-mpstd-07}
J.\ Gudmundsson, P.\ Laube, and T.\ Wolle.
\newblock Movement patterns in spatio-temporal data.
\newblock In S.\ Shekhar and H.\ Xiong, editors, {\em Encyclopedia of GIS}.
  Springer-Verlag, 2007.

\bibitem{gmmw-ffdbcle-19}
J.\ Gudmundsson, M.\ Mirzanezhad, A.\ Mohades, and C.\ Wenk.
\newblock Fast {F}r\'{e}chet distance between curves with long edges.
\newblock {\em International Journal of Computational Geometry and
  Applications}, 29(2):161--187, 2019.

\bibitem{gs-fqgt-15}
J.\ Gudmundsson and M.\ Smid.
\newblock Fast algorithms for approximate {F}r\'echet matching queries in
  geometric trees.
\newblock {\em Computational Geometry -- Theory and Applications},
  48(6):479--494, 2015.

\bibitem{him-anntrcd-12}
S.\ Har-Peled, P.\ Indyk, and R.\ Motwani.
\newblock Approximate nearest neighbor: Towards removing the curse of
  dimensionality.
\newblock {\em Journla of Theory of Computing}, 8:321--360, 2012.

\bibitem{hs-dbsd-99}
R.\ Hjaltason and H.\ Samet.
\newblock Distance browsing in spatial databases.
\newblock {\em ACM Transactions on Database Systems}, 24(2):265--318, 1999.

\bibitem{i-anndfpm-02}
P.\ Indyk.
\newblock Approximate nearest neighbor algorithms for {F}r\'{e}chet distance
  via product metrics.
\newblock In {\em Symposium on Computational Geometry (SoCG)}, pages 102--106,
  2002.

\bibitem{im-annrcd-98}
P.\ Indyk and R.\ Motwani.
\newblock Approximate nearest neighbor: towards removing the curse of
  dimensionality.
\newblock In {\em Proceedings of the twenty-ninth annual ACM symposium on
  Theory of computing (STOC '98)}, pages 604--613, 1998.

\bibitem{jxz-pssad-08}
M.\ Jiang, Y.\ Xu, and B.\ Zhu.
\newblock Protein structure-structure alignment with discrete {F}r\'{e}chet
  distance.
\newblock {\em Journal of Bioinformatics and Computational Biology},
  6(1):51--64, 2008.

\bibitem{k-tannshd-97}
J.M.\ Kleinberg.
\newblock Two algorithms for nearest-neighbor search in high dimensions.
\newblock In {\em Proceedings of the twenty-ninth annual ACM symposium on
  Theory of computing (STOC '97)}, pages 599--608, 1997.

\bibitem{kor-esannhds-00}
E.\ Kushilevitz, R.\ Ostrovsky, and Y.\ Rabani.
\newblock Efficient search for approximate nearest neighbor in high dimensional
  spaces.
\newblock {\em SIAM Journal of Computing}, 30(2):457--474, 2000.

\bibitem{langran-tgis-99}
G.\ Langran.
\newblock Ph{D} thesis. {T}ime in geographic information systems.
\newblock University of Washington, 1999.

\bibitem{l-hdrknn-91}
Y.\ Lee.
\newblock Handwritten digit recognition using k nearest-neighbor, radial-basis
  function, and backpropagation neural networks.
\newblock {\em Neitrul Computation}, 3(3):440--449, 1991.

\bibitem{lorentzos-fermrmgi-88}
N.A.\ Lorentzos.
\newblock Phd thesis.
\newblock In {\em A Formal Extension of the Relational Model for the
  Representation and Manipulation of Generic Intervals}. University of London,
  1988.

\bibitem{rkv-nnq-95}
N.\ Roussopoulos, S.\ Kelley, and F.\ Vincent.
\newblock Nearest neighbor queries.
\newblock In {\em ACM SIGMOD international conference on Management of data
  (SIGMOD'95)}, pages 71--79, 1995.

\bibitem{samet-06}
H.\ Samet.
\newblock In {\em Foundations of Multidimensional and Metric Data Structures}.
  Elsevier, 2006.

\bibitem{skb-fdbas-07}
R.\ Sriraghavendra, K.\ Karthik, and C.\ Bhattacharyya.
\newblock Fr\'{e}chet distance based approach for searching online handwritten
  documents.
\newblock In {\em Proc.\ 9th International Conference on Document Analysis and
  Recognition (ICDAR)}, pages 461--465, 2007.

\bibitem{wsb-ssmhds-98}
R.\ Weber, H.J.\ Schek, and S.\ Blott.
\newblock A quantitative analysis and performance study for similarity-search
  methods in high-dimensional spaces.
\newblock In {\em Proceedings of the 24th Int. Conf. Very Large Data Bases
  (VLDB)}, pages 194--205, 1998.

\end{thebibliography}
	
\end{document}